\newcommand{\mI}{\mathcal{I}}
\newcommand{\C}{\mathcal{C}}
\newcommand{\xp}{{\sf XP}\xspace}
\newcommand{\fpt}{{\sf FPT}\xspace}
\newcommand{\np}{\textup{\textsf{NP}}\xspace}
\newcommand{\nph}{{\textsf{NP}\textrm{-hard}\xspace}}
\newcommand{\npc}{{\textsf{NP}\textrm{-complete}\xspace}}
\newcommand{\woh}{\textsf{W[1]}\textrm{-hard}\xspace}
\newcommand{\wth}{\textsf{W[2]}\textrm{-hard}\xspace}
\newcommand{\true}{\textup{\textsc{true}}\xspace}
\newcommand{\false}{\textup{\textsc{false}}\xspace}
\newcommand{\lapxh}{\textrm{log-APX-hard}\xspace}
\newcommand{\defparproblem}[4]{
\begin{tcolorbox}[colback=gray!5!white,colframe=gray!75!black]
  \vspace{-1mm}
  \begin{tabular*}{\textwidth}{@{\extracolsep{\fill}}lr} #1  & {\bf{Parameter:}} #3 \\ \end{tabular*}
  {\bf{Input:}} #2  \\
  {\bf{Question:}} #4
  \vspace{-1mm}
\end{tcolorbox}
}
\newcommand{\mcs}{\textup{\textsc{MCS}}\xspace}
\newcommand{\mscs}{\textup{\textsc{MSCS}}\xspace}
\newcommand{\mcss}{\textup{\textsc{MCSS}}\xspace}
\newcommand{\mss}{\textup{\textsc{MSS}}\xspace}
\newcommand{\inn}{\textsc{\tiny in}\xspace}
\newcommand{\out}{\textsc{\tiny out}\xspace}
\newcommand{\sib}{\textsc{\tiny sib}\xspace}
\newcommand{\dinv}{\delta_v^{\textsc{{\tiny in}}}\xspace}
\newcommand{\doutv}{\delta_v^{\textsc{{\tiny out}}}\xspace}
\newcommand{\dsibv}{\delta_v^{\textsc{{\tiny sib}}}\xspace}
\newcommand{\cinv}{C_v^{\textsc{{\tiny in}}}\xspace}
\newcommand{\coutv}{C_v^{\textsc{{\tiny out}}}\xspace}
\newcommand{\csibv}{C_v^{\textsc{{\tiny sib}}}\xspace}
\newcommand{\dminv}{\delta_v^{\textsc{{\tiny min}}}\xspace}
\newcommand{\cminv}{C_v^{\textsc{{\tiny min}}}\xspace}
\newcommand{\dinvs}{\delta_S^{\textsc{{\tiny in}}}\xspace}
\newcommand{\doutvs}{\delta_S^{\textsc{{\tiny out}}}\xspace}
\newcommand{\dsibvs}{\delta_S^{\textsc{{\tiny sib}}}\xspace}
\newcommand{\cinvs}{C_S^{\textsc{{\tiny in}}}\xspace}
\newcommand{\coutvs}{C_S^{\textsc{{\tiny out}}}\xspace}
\newcommand{\csibvs}{C_S^{\textsc{{\tiny sib}}}\xspace}
\newcommand{\Eout}{\mathcal{E}_i^{^\textsc{{\tiny out}}}\xspace}
\newcommand{\Esib}{\mathcal{E}_i^{^\textsc{{\tiny sib}}}\xspace}
\newcommand{\din}{\delta^{^\textsc{{\tiny in}}}\xspace}
\newcommand{\svin}{S_v^\textsc{\tiny in}\xspace}
\newcommand{\svout}{S_v^\textsc{\tiny out}\xspace}
\newcommand{\svsib}{S_v^\textsc{\tiny sib}\xspace}
\newcommand{\NN}{\mathrm{NN}\xspace}
\newtcolorbox{coloredframe}[3][]{
    empty,
    breakable=true,
    sharp corners=all,
    top=4mm, left=4mm,
    borderline west={1.5pt}{0pt}{#3}, borderline north={1.5pt}{0pt}{#3},
    attach boxed title to top left={yshift=-1.75ex,xshift=6ex},
    coltitle=black,
    colback=white, colbacktitle=white,
    fonttitle=\bfseries,
    boxed title style={boxrule=0pt,colframe=white},
    title=#2,
    #1
}
\newtheorem{theorem}{Theorem}
\newtheorem{lemma}[theorem]{Lemma}
\newtheorem{definition}[theorem]{Definition}
\newtheorem{observation}[theorem]{Observation}
\begin{document}

\begin{frontmatter}



\title{New Complexity and Algorithmic Bounds for Minimum Consistent Subsets}


\author[label1]{Aritra Banik}

\author[label5]{Sayani Das}

\author[label2]{Anil Maheshwari}

\author[label3]{Bubai Manna}

\author[label4]{Subhas C Nandy}

\author[label1]{Krishna Priya K M}

\author[label3]{Bodhayan Roy}

\author[label4]{Sasanka Roy}

\author[label1]{Abhishek Sahu}

\affiliation[label1]{organization={National Institute of Science, Education and Research, An OCC of Homi Bhabha National Institute},
             addressline={Bhubaneswar},
             postcode={752050},
             state={Odisha},
             country={India}}

\affiliation[label5]{organization={Department of Mathematics, Mahindra University, Hyderabad, India},
             addressline={Hyderabad},
             country={India}}
             
\affiliation[label2]{organization={School of Computer Science, Carleton University},
             addressline={Ottawa},
             postcode={ON K1S 5B6},
             country={Canada}}

\affiliation[label3]{organization={Department of Mathematics, Indian Institute of Technology, Kharagpur},
             postcode={721302},
             state={West Bengal},
             country={India}}
             
\affiliation[label4]{organization={Advanced Computing and Microelectronics Unit, Indian Statistical Institute},
             addressline={Kolkata},
             state={West Bengal},
             postcode={700108},
             country={India}}


\begin{abstract}
In the Minimum Consistent Subset (\mcs) problem, we are presented with a connected simple undirected graph $G$, consisting of a vertex set $V(G)$ of size $n$ and an edge set $E(G)$. Each vertex in $V(G)$ is assigned to a color from the set $\{1,2,\ldots, c\}$. The objective is to determine a subset $S \subseteq V(G)$ with minimum possible cardinality, such that for every vertex $v \in V(G)$, at least one of its nearest neighbors in $S$ (measured in terms of the hop distance) shares the same color as $v$. A variant of \mcs is the minimum strict consistent subset (\mscs) in which instead of requiring at least one nearest neighbor of $v$, all the nearest neighbors of $v$ in $S$ must have the same color as $v$. The decision problem for \mcs, which asks whether there exists a subset $S$ of cardinality at most $l$ for some positive integer $l$, is known to be $\npc$ even for planar graphs.

{In this paper, we first show that the \mcs problem is \lapxh on general graphs. It is also $\npc$ on trees. 
We also provide a fixed-parameter tractable (\fpt) algorithm for \mcs on trees parameterized by the number of colors ($c$) running in $\mathcal{O}(2^{6c}n^6)$ time, significantly improving the currently known best algorithm whose running time is $\mathcal{O}(2^{4c}n^{2c+3})$.} In an effort to better understand the computational complexity of the \mcs problem across different graph classes, we extend our investigation to interval graphs. We show that it remains $\npc$ for interval graphs, thus adding to the family of graph classes where $\mcs$ remains intractable. For $\mscs$, we show that the problem is $\lapxh$ on general graphs and $\npc$ on planar graphs\footnote{A preliminary version of this article appeared in the Proceedings of the 44th Conference on Foundations of Software Technology and Theoretical Computer Science (FSTTCS 2024).}.
\end{abstract}
\begin{keyword}
Nearest-Neighbor Classification \sep Minimum Consistent Subset\sep Minimum Strict Consistent Subset \sep Interval Graphs \sep Planar Graphs\sep Trees \sep Parameterized complexity \sep NP-complete \sep $\lapxh$



\end{keyword}

\end{frontmatter}



\section{Introduction}

For many supervised learning algorithms, the input comprises a colored training dataset $T$ in a metric space $(\mathcal{X},d)$ where each element $t\in T$ is assigned a color $C(t)$ from $[c]$. 
The objective is to preprocess $T$ in a manner that enables rapid assignment of a color to any uncolored element in $\mathcal{X}$, satisfying specific optimality criteria. 
One commonly used optimality criterion is the nearest neighbor rule, where each uncolored element $x$ is assigned a color based on the colors of its $k$ nearest neighbors in $T$ (where $k$ is a fixed integer). 
The efficiency of such an algorithm relies on the size of the training dataset. 
Therefore, it is crucial to reduce the size of the training dataset while preserving distance information. 
This concept was formalized by Hart~\cite{hart1968} in 1968 as the Minimum Consistent Subset (\mcs) problem. 
In this problem, given a colored training dataset $T$, the objective is to find a subset $S\subseteq T$ of minimum cardinality such that for every point $t\in T$, the color of $t$ is the same as the color of one of its nearest neighbors in $S$. 
Since its inception, the \mcs problem has found numerous applications, as evidenced by over $2800$ citations to \cite{hart1968} in Google Scholar.   

The \mcs problem for points in $\Re^2$ under the Euclidean norm is shown to be $\npc$ if the input points are colored with at least three colors. 
Furthermore, it remains $\npc$ even with two colors \cite{Wilfong, Khodamoradi}. 
Recently, it has been shown that the \mcs problem is \woh when parameterized by the output size \cite{Chitnis22}. 

In this paper, we explore the minimum consistent subset problem when $(\mathcal{X},d)$ is a graph metric. We use $[n]$ to denote the set of integers $\{1,\ldots, n\}$. For any graph $G$, we denote the set of vertices of $G$ by $V(G)$ and the set of edges by $E(G)$. Consider any graph $G$ and an arbitrary vertex coloring function $C:V(G)\rightarrow [c]$. For $U\subseteq V(G)$, let $C(U)={C(u):u\in U}$ denote the colors in $U$. For any two vertices $u,v\in V(G)$, the number of edges in the shortest path between $u$ and $v$ in $G$ is called the distance between $u$ and $v$, denoted by $d(u, v)$. This distance is also referred to as the \emph{hop-distance} between $u$ and $v$. For a vertex $v\in V(G)$ and any subset of vertices $U\subseteq V(G)$, let $d(v,U)=\min_{u\in U}d(v,u)$. The nearest neighbors of $v$ in the set $U$ are denoted as $\NN(v,U)$, formally defined as $\NN(v,U)=\{u\in U:d(v,u)=d(v,U)\}$.

For any graph $G$ and vertex $v \in V(G)$, let 
$N(v) = \{u \in V(G) : uv \in E(G)\}$ denote the (open) neighborhood of $v$, 
and let $N[v] = N(v) \cup \{v\}$ denote its closed neighborhood. We denote the distance between two subgraphs $G_1$ and $G_2$ in $G$ by $d(G_1,G_2)=\min \{d(v_1,v_2):v_1\in V(G_1),v_2\in V(G_2)\}$. For any subset of vertices $U\subseteq V(G)$ in a graph $G$, $G[U]$ denotes the subgraph of $G$ induced on $U$. Most of the symbols and notations of graph theory used are standard and taken from~\cite{diestel2012graph}. 

Suppose $G = (V, E)$ is a given connected and undirected graph, where vertices are partitioned into $c$ color classes, namely $V_1, V_2, \cdots, V_c$. This means that each vertex of $V$ has a color from the set of colors $\{1,2,\dots,c\}$, and each vertex in the set $V_i$ has color $i$. Therefore, $\cup_{i=1}^{c}V_i=V$, and $V_i\cap V_j=\phi$ for $i\neq j$. Throughout the paper, by a $c$-colored graph, we mean that each vertex of the graph has been assigned one of $c$ colors, and adjacent vertices are allowed to have the same color. For any set $S$, we denote its cardinality by $\lvert S\rvert$. Next, we formally define the minimum consistent subset and minimum strict consistent subset problems.

\begin{definition} {\bf Minimum Consistent Subset(\mcs) Problem}\\
 A \emph{Minimum Consistent Subset (\mcs)} is a subset $S \subseteq V$ of minimum cardinality such that for every vertex $v \in V$, if $v \in V_i$ then $\NN(v,S)\cap V_i \ne \emptyset$. 
\end{definition}

\begin{definition}{\bf Minimum Strict Consistent Subset(\mscs) Problem}\\
 A subset $S\subseteq V$ is said to be an \mscs if, for each vertex $v\in V$, every vertex in the set of nearest neighbors of $v$ in $S$ (that is $\NN(v,S)$) has the same color as $v$ and $\lvert S \rvert$ is minimum. 
\end{definition}

\begin{figure}[h]
\centering
\includegraphics[width=0.9\textwidth]{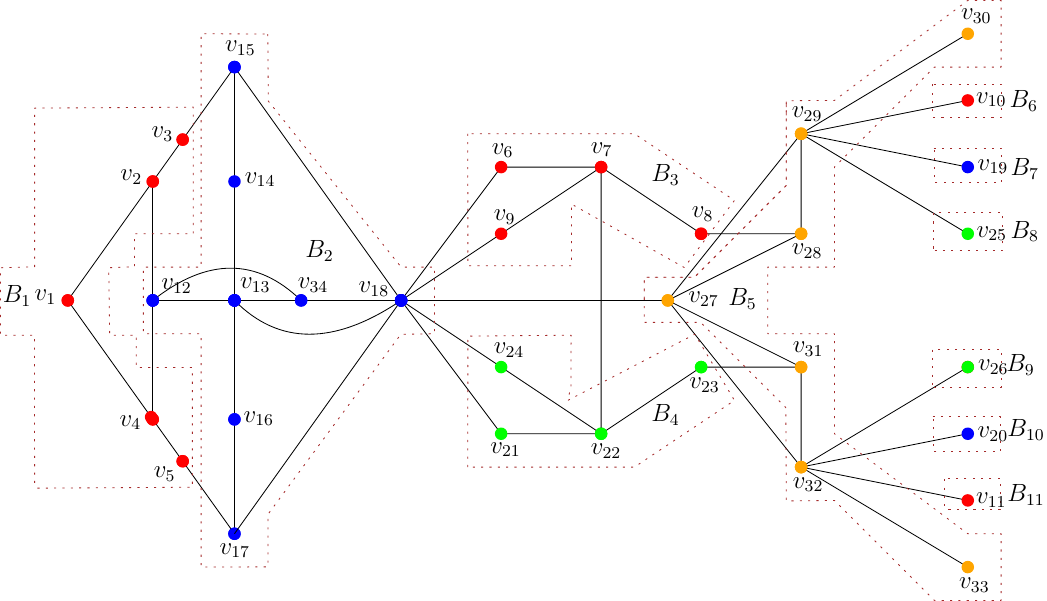}
\caption{$C=\{red,\text{ }blue,\text{ } $ $ green,\text{ } orange\}$ represent set of colors and the corresponding classes are $V_{red}=\{v_1,$ $\dots, $ $ v_{11}\}$, $V_{blue}=$ $\{v_{12},$ $\dots,$ $v_{20}, v_{34}\}$, $V_{green}=$ $\{v_{21},$ $\dots,$ $v_{26}\}$, and $V_{orange}=$ $\{v_{27},$ $\dots,$ $v_{33}\}$. The sets of vertices $\{v_{30}, v_{10}, v_{19}, v_{25}\}$ and $\{v_1, v_{13}, v_7, v_{22}, v_{29}, v_{32}, v_{10}, v_{19}, v_{25}, v_{26}, v_{20}, v_{11}\}$ forms \mcs and \mscs, respectively.
Also, $\{v_{26},  v_{20}, v_{11}, v_{33}\}$ and $\{v_1,v_{34}, v_7, v_{22}, v_{29}, v_{32}, v_{10}, v_{19}, v_{25}, v_{26}, v_{20}, v_{11}\}$,
are \mcs and \mscs, respectively.  The vertices inside the brown-dotted region form a block, and the blocks are labeled as $B_1, \dots, B_{11}$.} \label{ex1}
\end{figure}

The examples of \mcs and \mscs are shown in \Cref{ex1}. There may be more than one \mcs in an example; however, its size remains unchanged for the same example. Similarly, there may be multiple \mscs in a given example; however, their sizes remain unchanged for the same example.

The decision version of the \mcs and \mscs problems on graphs is defined as follows:

\begin{tcolorbox}[enhanced,title={\color{black} 
\sc{Decision version of \mcs and \mscs problems}}
colback=white, boxrule=0.4pt, attach boxed title to top center={xshift=-0.8 cm, yshift*=-2mm}, boxed title style={size=small,frame hidden,colback=white}]
		
    \textbf{Input:} A graph $G$, a coloring function $C:V(G)\rightarrow [c]$, and an integer $\ell$.\\
    \textbf{Question 1:} Does there exist a consistent subset of size $\le \ell$ for $(G,C)$? 
    
    \textbf{Question 2:} Does there exist a strict consistent subset of size $\le \ell$ for $(G,C)$? 

\end{tcolorbox}

Banerjee et al.~\cite{BBC} proved that the \mcs problem is \wth \cite{book} when parameterized by the size of the minimum consistent set, even when limited to two colors, thus rulling the possibility of an \fpt algorithm parameterized by $(c+\ell)$ under standard complexity-theoretic assumptions for general graphs. This naturally raises the question of determining the simplest graph classes where the problem remains computationally intractable. Dey et al.~\cite{DeyMN23} presented a polynomial-time algorithm for \mcs on simple graph classes such as paths, spiders, combs, and caterpillars. The \mcs has gained significant research attention in recent years, particularly when the underlying graph is a tree.  Dey et al.~\cite{DeyMN21} presented a polynomial-time algorithm for bi-colored trees, and Arimura et al.~\cite{Arimura23} presented an \xp algorithm parameterized by the number of colors $c$, with a running time of $\mathcal{O}(2^{4c}n^{2c+3})$.

Biniaz and Khamsepour \cite{biniaz2024minimum} presented a polynomial-time algorithm for the \emph{minimum consistent spanning subset} (\mcss) in trees. The minimum consistent spanning subset problem is a variant of \mcs and is defined as finding a subset $S\subseteq V(G)$, of minimum cardinality, such that for every vertex $v\in V(G)$, if $v\in V_i$, then $\NN (v,S)\cap V_i\neq \emptyset$ and for each block $B_i$, $B_i\cap S\neq \emptyset$. In \Cref{bubailemma101}, we show that each block must contain at least one vertex of every strict consistent subset. Therefore, the algorithm for \mscs in trees \cite{manna2024minimumstrictconsistentsubset} is quite similar to the algorithm for \mcss. However, minor adjustments in graph settings may be required to find \mscs in trees. For example, in \Cref{ex1}, if we select $v_3$ in \mcss, then we must also select either $v_{15}$ or at least one of $v_{14}$ and $v_{18}$. However, if we take $v_3$ in \mscs, then we must select $v_{15}$ in \mscs. Thus, in \mscs, choosing a vertex in one block may constrain choices in another, unlike in \mcss. Nevertheless, the computational time and overall algorithmic logic remain the same. Additionally, we observe that neighbor blocks must depend on each other when finding a solution for \mscs and \mcss. 

Wilfong \cite{Wilfong} defined two problems, \mcs and the \emph{Minimum Selective Subset} (\mss). For graph settings, \(\mss\) is studied in \cite{BBC}, where it is proved that the \(\mss\) problem is \(\npc\) on general graphs. Further algorithms and hardness reductions for trees and planar graphs are given in \cite{manna2025minimumselectivesubsetgraph}.

{\bf New Results:}
First, we show that the \mcs problem is \lapxh on general graphs in Section \ref{lapxhmcs}. The most intriguing question yet to be answered is whether $\mcs$ remains $\nph$  for trees \cite{Arimura23, DeyMN21}. In this paper, we decisively answer this question in the affirmative. This is particularly noteworthy given the scarcity of naturally occurring problems known to be $\nph$ on trees. Our contribution includes a reduction from the MAX-2SAT problem, detailed in \Cref{hardnesstree}. Next, we show that \mcs is fixed-parameter tractable (\fpt) for trees on $n$ vertices, significantly improving the results presented in 
Arimura et al.~\cite{Arimura23}. Our intricate dynamic programming algorithm runs in  $\mathcal{O}(2^{6c}{n}^6)$ time, whereas 
\cite{Arimura23} requires $\mathcal{O}(2^{4c}n^{2c+3})$ time; see \Cref{fpt}. 

Moreover, in  \Cref{hardnessint}, we show that \mcs on interval graphs is \nph. While interval graphs are unrelated to trees, our hardness result for interval graphs raises new questions about the fixed-parameter tractability of \mcs on this graph class. 

For $\mscs$, we show that the problem is $\lapxh$ on general graphs and $\npc$ on planar graphs in  \Cref{section7} and \Cref{npmscs}, respectively.

\section{Preliminary Results}\label{pre}

In this section, we state some definitions and preliminary results. 

\begin{observation}
If all vertices of a graph $G$ have the same color, i.e., $G$ is monochromatic, then every vertex of $G$ is both an \mcs and an \mscs.
\end{observation}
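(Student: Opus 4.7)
The plan is straightforward: show that any singleton $\{v\} \subseteq V(G)$ already satisfies both the \mcs and \mscs conditions when $G$ is monochromatic, and then argue that no smaller subset can work. Since the statement claims ``every vertex'' is an \mcs and an \mscs, the content really amounts to verifying that the singleton consisting of any one vertex is a valid consistent (resp.\ strict consistent) subset of minimum cardinality.

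First I would fix a vertex $v \in V(G)$ and let $i$ denote the common color of $G$, so $V = V_i$. Taking $S = \{v\}$, for every $u \in V(G)$ the definition gives $\NN(u, S) = \{v\}$, because $S$ has only one element. Since $v \in V_i$ and $u \in V_i$, we immediately get $\NN(u, S) \cap V_i = \{v\} \neq \emptyset$, so $S$ meets the \mcs condition. For \mscs, the set $\NN(u, S) = \{v\}$ consists entirely of vertices colored $i$, which is exactly the color of $u$, so $S$ meets the stricter requirement as well.

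Next I would argue optimality. The empty set cannot be a consistent subset because for any $v \in V(G)$ we would have $\NN(v, \emptyset) = \emptyset$, violating the requirement $\NN(v,S) \cap V_i \neq \emptyset$ (and likewise for the strict variant, which requires $\NN(v,S)$ to be a nonempty subset of $v$'s color class). Therefore $|S| \geq 1$ for any consistent or strict consistent subset, and the singleton achieves this bound. Combining the two directions shows that every single vertex of $G$ forms both an \mcs and an \mscs, as claimed. The main step is simply unpacking the definitions; there is no real obstacle here since monochromaticity collapses the color-matching requirement to a trivial one, so I would expect the proof to fit in just a line or two.
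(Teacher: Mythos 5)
Your proposal is correct and matches the (implicit) reasoning behind the paper's observation, which is stated without proof precisely because the argument is this direct unpacking of the definitions: any singleton $\{v\}$ trivially satisfies both the consistent and strict consistent conditions when all vertices share one color, and the empty set cannot, so every singleton is minimum. Nothing further is needed.
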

In the rest of the paper, we assume that $G$ is not monochromatic. Otherwise, the consistent set problems have trivial solutions. 

\begin{observation}\label{observation6}
Every strict consistent subset is a consistent subset.
\end{observation}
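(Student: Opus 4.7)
The proof is essentially a direct unwinding of the two definitions, so my plan is short and self-contained. I would begin by fixing a strict consistent subset $S \subseteq V$ and an arbitrary vertex $v \in V$, and letting $i$ denote its color class so that $v \in V_i$. The goal is to show that $\NN(v,S) \cap V_i \neq \emptyset$, which is precisely the defining condition of a consistent subset.

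First I would invoke the strict-consistency hypothesis: every vertex $u \in \NN(v, S)$ satisfies $C(u) = C(v) = i$, hence $\NN(v,S) \subseteq V_i$. Next I would argue $\NN(v,S) \neq \emptyset$. Since $G$ is connected, the hop-distance $d(v,u)$ is finite for every $u \in V$, so whenever $S \neq \emptyset$ the set $\NN(v,S) = \{u \in S : d(v,u) = d(v,S)\}$ is non-empty because it contains any minimizer of the (finite) distance from $v$ to $S$. The fact that $S$ is non-empty follows from the standing assumption in the excerpt that $G$ is not monochromatic, together with the implicit requirement that a strict consistent subset must actually provide a nearest neighbor for every vertex (otherwise the classification rule is undefined).

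Combining the two bullets, $\NN(v,S) \cap V_i = \NN(v,S) \neq \emptyset$, and since $v$ was arbitrary, $S$ fulfils the consistent-subset condition. I do not foresee any real obstacle: the statement is a tautological consequence of the definitions, and the only micro-subtlety is the non-emptiness of $\NN(v,S)$, which I would address in one sentence by appealing to connectivity of $G$ and the non-monochromatic standing assumption made immediately before the observation.
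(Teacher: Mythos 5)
Your proof is correct and matches the intended argument: the paper states this observation without proof precisely because it is the immediate definitional unwinding you give (all nearest neighbors share $v$'s color, hence at least one does). Your one-sentence handling of the non-emptiness of $\NN(v,S)$ is a reasonable way to close the only degenerate edge case.
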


\begin{definition} {\bf Block}\\
 For any graph $G$, a \emph{block} is a maximal connected set of vertices sharing the same color. 
\end{definition}
In \Cref{ex1}, the vertices inside the brown-dotted region form a block.
\begin{lemma}{{\label{bubailemma101}}}
For any graph $G$ and an arbitrary set of colors $C$, each block must contain at least one vertex in every strict consistent subset.
\end{lemma}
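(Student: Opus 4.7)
The plan is to argue by contradiction. Suppose $S$ is a strict consistent subset and there is a block $B$ of some color $i$ with $B \cap S = \emptyset$. I would then choose $v \in B$ minimizing $d(v, S)$, set $d = d(v, S)$, and pick a nearest neighbor $u \in S$ of $v$. By the strict consistency condition applied to $v$, every nearest neighbor of $v$ in $S$ shares $v$'s color, so $u$ has color $i$; since $B \cap S = \emptyset$, we have $u \notin B$.

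Next I would examine a shortest path $v = v_0, v_1, \ldots, v_d = u$. The vertex $v_1$ cannot lie in $B$: if it did, then $d(v_1, S) \le d(v_1, u) = d - 1 < d$, contradicting the minimality of $v$ within $B$. But then $v_1$ is adjacent to $v_0 \in B$, so if $v_1$ also had color $i$ it would belong to the maximal monochromatic connected set $B$. Hence $v_1$ has some color $i' \ne i$.

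The final step is the key contradiction. Let $w \in S$ be a nearest neighbor of $v_1$. By strict consistency applied to $v_1$, $w$ has color $i'$. The triangle-type inequality for graph distance gives
\[
d(v, w) \le d(v, v_1) + d(v_1, w) \le 1 + (d-1) = d,
\]
because $u \in S$ is already at distance $d - 1$ from $v_1$, so $d(v_1, S) \le d-1$. Thus $w$ is a nearest neighbor of $v$ in $S$, which by strict consistency must have color $i$; but $w$ has color $i' \ne i$, a contradiction.

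The only mildly delicate point is ensuring the shortest-path step out of $B$ is well defined and that the ``$v_1$ has a different color'' argument survives the degenerate case $d = 1$. When $d = 1$, the vertex $u$ itself plays the role of $v_1$: it is adjacent to $v$, has color $i$ by strict consistency, and thus must lie in the block $B$, contradicting $B \cap S = \emptyset$ directly. So the generic argument and this tiny boundary case together cover all situations, and no further case analysis is needed.
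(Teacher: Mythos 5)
Your proof is correct and follows essentially the same route as the paper's: assume a block $B$ is missed, take the shortest path from a vertex $v\in B$ to its (necessarily same-colored) nearest neighbor in $S$, locate a differently-colored vertex on that path, and apply strict consistency to that vertex together with the triangle inequality to produce a wrong-colored nearest neighbor of $v$ in $S$. Your extremal choice of $v$ (minimizing $d(v,S)$ over $B$) and the explicit handling of the $d=1$ case are minor organizational refinements of the same argument, and both are sound.
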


\begin{proof}
Let $S$ be a strict consistent subset. Suppose $B$ is a block and $S$ does not contain any vertex from $B$. Then every vertex of $B$ must have its nearest vertex in $S$ from a different block of the same color, say $B'$. Assume $v'$ (which is in $B'$) is included in $S$, and $v$ (which is in $B$) has $v'$ as its nearest neighbor in $S$. Since $B$ and $B'$ are distinct blocks, the shortest path between $v$ and $v'$ must contain at least one vertex of a different color from $v$. We assume that $v''$ is such a vertex with a different color than $C(v)$. Now we have two cases:
\begin{itemize}
    \item If $v''\in S$, then $v$ would have $v''$ as its nearest neighbor instead of $v'$ with $C(v)\neq C(v'')$, contradicting the strict consistency of $S$. Thus, $v''\notin S$.
    \item Since $v''\notin S$, it must have a nearest neighbor $x\in S$ of the same color such that $d(v'',x)<d(v'',v')$; otherwise, at least one nearest neighbor of $v''$ in $S$ would have different color, contradicting the definition of a strict consistent subset. Now $d(v,x)\leq d(v,v'')+d(v'',x)< d(v,v'')+d(v'',v')$, which implies $v$ has $x$ as its nearest neighbor in $S$ with $C(v)\neq C(x)$, leading to a contradiction.
\end{itemize}
Thus the lemma holds for strict consistent subsets. 

\end{proof}

\section{log-APX Approximation of MCS on General Graphs}\label{lapxhmcs}
We prove that \mcs is \lapxh (see \cite{10.5555/1965254} for definitions of various complexity classes). We reduce the Minimum Dominating Set problem to the Consistent Subset problem. In the {\em Dominating Set} problem, given a graph $G$ and an integer $k$, the objective is to decide whether there exists a subset $U\subseteq V(G)$ of size $k$ such that for any vertex $v\in V(G)$, $N[v]\cap U\neq \emptyset$. It is known that the minimum Set Cover problem is \lapxh, i.e., it is $\nph$ to approximate within a $c\cdot \log n$ factor. factor~\cite{raz97}. As there exists an $L$-reduction from Set Cover to the Dominating Set problem, the latter is \lapxh.

\begin{figure}[t]
\centering
\includegraphics[scale=1.1]{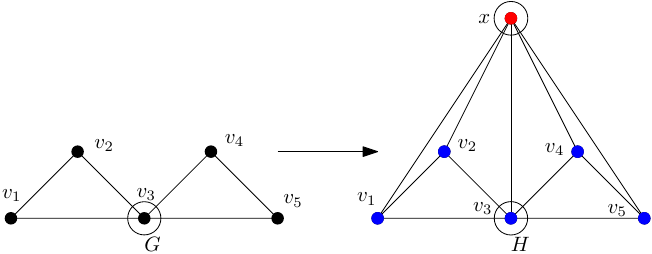}
\caption{Example showing the reduction of a Dominating Set instance of graph $G$ into an equivalent \mcs instance of graph $H$. Vertices inside small circles indicate the solution.}
\label{fig:bubai}
\end{figure}

\textbf{Reduction.} Let $G=(V(G),E(G))$ be a graph. We construct an instance of the consistent subset problem $H=(V(H), E(H))$ as follows.
\begin{enumerate}
    \item $V(H)=V(G)\cup\{x\}$.
    \item $E(H)=E(G)\cup \{(x,v):v\in V(G)\}$.
    \item For all $v \in V(H)\setminus \{x\}$, we set $C(v)=1$. We also set $C(x)=2$.
\end{enumerate}

For the sake of completeness, we state the following lemma.

\begin{lemma}\label{bubailemma6}
There exists a Dominating Set for $G$ of size at most $k$ if and only if there is a consistent subset of size at most $k+1$ for the graph $H$.
\end{lemma}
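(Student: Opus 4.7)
The plan is to exploit two simple structural features of the construction: the vertex $x$ is the unique vertex of color $2$, so it must belong to every consistent subset of $H$; and since $x$ is adjacent to all of $V(G)$, we have $d_H(v,x)=1$ for every $v \in V(G)$, which caps the distance from any original vertex to any consistent subset containing $x$ at $1$.

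For the forward direction, given a dominating set $D$ of $G$ with $|D| \le k$, I would propose the set $S = D \cup \{x\}$ as a consistent subset of $H$, of size at most $k+1$. The vertex $x$ is its own nearest neighbor in $S$ and shares its color, so consistency holds for $x$. For a vertex $v \in V(G)$, if $v \in D$ then $v$ is its own nearest neighbor in $S$; otherwise, since $D$ dominates $G$, there exists $u \in D$ with $uv \in E(G)$, and $d_H(v,u)=1=d_H(v,x)$, so $u$ belongs to $\NN(v,S)$ and shares color $1$ with $v$. For the backward direction, given a consistent $S$ with $|S|\le k+1$, I would first argue $x\in S$: otherwise every element of $S$ has color $1$, contradicting consistency at $x$. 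Setting $D = S\setminus\{x\}$, so $|D|\le k$, I would then show that $D$ dominates $G$. Fix $v \in V(G)$; because $x\in S$ and $d_H(v,x)=1$, we have $d(v,S)\le 1$, so $\NN(v,S) \subseteq \{w \in S : d_H(v,w)\le 1\}$. Consistency at $v$ forces some $u \in \NN(v,S)$ to have color $1$, hence $u \in V(G)$. Either $u=v$, in which case $v \in D \cap N_G[v]$, or $d_H(v,u)=1$; in the latter case, since $E(H)\setminus E(G)$ consists only of edges incident to $x$ and $u\neq x$, the edge $uv$ must lie in $E(G)$, placing $u \in N_G(v) \cap D$.

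The only nontrivial point, which I would emphasize, is the last observation in the backward direction: adjacency in $H$ between two vertices of $V(G)$ is the same as adjacency in $G$, because all new edges added by the construction are incident to $x$. Everything else reduces to tracking distances in $H$ and applying the consistency condition at $x$ (to force $x \in S$) and at an arbitrary $v \in V(G)$ (to force the existence of a color-$1$ neighbor in $S$).
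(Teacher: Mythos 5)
Your proposal is correct and follows essentially the same argument as the paper's proof: both directions hinge on $x$ being forced into any consistent subset as the unique color-$2$ vertex, on $d_H(v,x)=1$ capping the nearest-neighbor distance for every original vertex, and on adjacency between vertices of $V(G)$ being unchanged in $H$. The only difference is presentational (you argue directly where the paper argues by contradiction), so nothing further is needed.
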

 \begin{proof}
    Let $D$ be a Dominating Set of size $k$ for the graph $G$. We claim that $D'=\{x\}\cup D$ is a consistent subset of $H$. If not, then there is a vertex $v_i\in V(H)\setminus D'$ such that $d(v_i,D)>d(v_i,x)=1$. This contradicts the assumption that $D$ is a Dominating Set, and hence the claim holds.

    On the other hand, suppose $D'$ is a consistent subset of size $k+1$ in the graph $H$. Observe that $x\in D'$ as $x$ is the only vertex with color $2$. We claim that $D=D'\setminus \{x\}$ is a Dominating Set of $G$. If not, then there is a vertex $v\in V(G)\backslash D$ such that $N(v)\cap D=\emptyset$. Thus $d(v,D')>1$ but $d(v,x)=1$ and $C(v)\neq C(x)$. This contradicts the assumption that $C$ is a consistent subset and hence the claim holds.
\end{proof}
From \Cref{bubailemma6}, we have the following theorem.
\begin{theorem}
    There exists a constant $c>0$ such that it is $\nph$ to approximate the \mcs problem within a factor of $c\cdot \log n$.
\end{theorem}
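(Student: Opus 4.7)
The plan is to turn the reduction of Lemma~\ref{bubailemma6} into an approximation-preserving one and then invoke the known $\log$-APX-hardness of \textsc{Dominating Set} (which in turn comes from the Raz--Safra result on \textsc{Set Cover} via an $L$-reduction). Concretely, I will argue the contrapositive: if \mcs admitted a $c\cdot \log n$-approximation for every $c>0$, then so would \textsc{Dominating Set}, contradicting the known constant $c'>0$ for which $c'\cdot\log n$-approximation of \textsc{Dominating Set} is $\nph$.

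First I would fix an arbitrary instance $G$ of \textsc{Dominating Set} with $n = |V(G)|$ vertices, and build the graph $H$ exactly as in the reduction, so that $|V(H)| = n+1$ and $\textbf{OPT}_{\mcs}(H) = \textbf{OPT}_{\mathrm{DS}}(G) + 1$ by Lemma~\ref{bubailemma6}. Next, I would observe that the reduction is solution-lifting: given any consistent subset $S$ of $H$ with $|S|\le \ell$, the set $D := S\setminus\{x\}$ is a dominating set of $G$ of size at most $\ell - 1$ (because $x$ is the unique color-$2$ vertex, so $x\in S$, and the proof of Lemma~\ref{bubailemma6} shows every $v\in V(G)\setminus D$ has a neighbor in $D$).

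Now suppose, for contradiction, that for every constant $c>0$ there is a polynomial-time $c\cdot \log n$-approximation algorithm $\mathcal{A}_c$ for \mcs. Fix any target constant $c'>0$ and choose $c$ small enough (say $c \le c'/4$). Running $\mathcal{A}_c$ on $H$ yields a consistent subset $S$ with
\[
|S| \;\le\; c\cdot \log(n+1)\cdot \textbf{OPT}_{\mcs}(H) \;=\; c\cdot \log(n+1)\cdot\bigl(\textbf{OPT}_{\mathrm{DS}}(G)+1\bigr).
\]
The extracted dominating set $D = S\setminus\{x\}$ therefore satisfies $|D| \le c\cdot \log(n+1)\cdot(\textbf{OPT}_{\mathrm{DS}}(G)+1)-1$. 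Since $\textbf{OPT}_{\mathrm{DS}}(G)\ge 1$ (assuming $V(G)\ne\emptyset$) and $\log(n+1)\le 2\log n$ for $n\ge 2$, this gives $|D| \le 4c\cdot \log n\cdot \textbf{OPT}_{\mathrm{DS}}(G) \le c'\cdot \log n\cdot \textbf{OPT}_{\mathrm{DS}}(G)$, contradicting the $\log$-APX-hardness of \textsc{Dominating Set}.

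The main obstacle is the bookkeeping in the last step, namely making sure that the additive $+1$ in $\textbf{OPT}_{\mcs}(H)=\textbf{OPT}_{\mathrm{DS}}(G)+1$ and the change from $\log n$ to $\log(n+1)$ only lose small constant factors; both are absorbed into the final constant $c$, so the argument is clean once the solution-lifting direction of Lemma~\ref{bubailemma6} is invoked carefully (in particular, that $x$ is forced into any consistent subset of $H$).
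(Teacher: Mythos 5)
Your proposal is correct and follows essentially the same route as the paper: the paper derives the theorem directly from Lemma~\ref{bubailemma6}, implicitly using that the reduction shifts the optimum by exactly one and that solutions lift back by deleting $x$, which is precisely the bookkeeping you spell out. Your version merely makes explicit the constant-factor losses (the additive $+1$ and the change from $\log n$ to $\log(n+1)$) that the paper absorbs without comment.
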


\section{NP-hardness of MCS on Trees} \label{hardnesstree}

In this section, we prove that \mcs is $\npc$ when the input graph is a tree and the number of colors is arbitrary (that is $c$ is not constant). We present a reduction from the MAX-2SAT problem to \mcs. Let $\theta$ be a given MAX-$2$SAT formula with $n$ variables $\{x_1, \ldots, x_n\}$ and $m$ clauses $\{c_1, \ldots, c_m\}$, $n,m \geq 50$. We construct an instance $(T_{\theta}, C_{\theta})$ of the MCS problem from $\theta$ as follows.

\begin{tcolorbox}[breakable,bicolor,
  colback=cyan!5,colframe=cyan!5,title=Interval Graph Construction.,boxrule=0pt,frame hidden]

\noindent \underline{\textbf{\large Construction of $(T_{\theta}, C_{\theta})$.}}

\smallskip

\noindent The constructed tree $T_{\theta}$ is composed of variable gadgets, clause gadgets, and central vertex gadgets. 
\smallskip

\noindent {\bf Variable Gadget.}

\noindent A variable gadget $X_i$ for the variable $x_i \in \theta$ has two components where each component has a literal path and $M$ pairs of stabilizer vertices, as described below (see \Cref{fig:red}), where $M$ is very large (we will define the exact value of $M$ later).

\smallskip

\noindent {\bf Literal paths:} The two literal paths of the variable gadget $X_i$ are $P^\ell_i=\langle x_i^1, x_i^2,x_i^3,x_i^4\rangle$ and $\overline{P}^\ell_i = \langle \overline{x}_i^1,\overline{x}_i^2,\overline{x}_i^3, \overline{x}_i^4\rangle$, each consisting of four vertices; they are referred to as \emph{positive literal path} and \emph{negative literal path}, respectively. Here, by a path of $k$ ($> 2$) vertices, we mean a connected graph with $k-2$ vertices of degree $2$ and the remaining two nodes having degree 1. All the vertices on the path $P_i^\ell$ are of color $c^\ell_i$ and all the vertices on the path $\overline{P}_i^\ell$ are of color $\overline{c}^\ell_i$.
 
\smallskip

\noindent {\bf Stabilizer vertices:} $M$ pairs of vertices $\{s_{i}^1, \overline{s}_i^1\}, \ldots, \{s_{i}^M, \overline{s}_i^M\}$, where the color of each pair of vertices $\{s_{i}^j, \overline{s}_i^j\}$ is $c^s(i,j)$. We denote the set of vertices $S_i=\{ s_{i}^1,  \ldots, s_{i}^M\}$ as  \emph{positive stabilizer vertices} and the set of vertices $\overline{S}_i=\{\overline{s}_i^1, \ldots,  \overline{s}_i^M\}$ as \emph{negative stabilizer vertices}. Each vertex in $S_i$ is connected to $x_i^1$ and each vertex in $\overline{S}_i$ is connected to $\overline{x}_i^1$.

The intuition behind this set of stabilizer vertices is that by setting a large value of $M$ we ensure that either $\{s_{i}^1,  \ldots, s_{i}^M\}$ or $\{\overline{s}_i^1, \ldots,  \overline{s}_i^M\}$ must be present in any ``small sized solution''.

\smallskip

\noindent {\bf Clause Gadget.}

\noindent For each clause $c_i=(y_i\lor z_i)$, where $y_i$ and $z_i$ are two (positive/negative) literals, we define the clause gadget $C_i$ as follows. It consists of three paths, namely \emph{left occurrence path} $P_i^Y=\langle y_i^1,\dots, y_i^7\rangle$, \emph{right occurrence path} $P_i^Z=\langle z_i^1,\dots, z_i^7\rangle$, and {\em clause path} $P_i^W=\langle w_i^1,\dots, w_i^7\rangle$ (see  \Cref{fig:red}).
All the vertices in $P_i^Y$ (resp. $P_i^Z$) have the same color as the corresponding literal path in their respective variable gadgets, i.e. for any literal, say $y_i$ in $C_i$, if $y_i=x_i$ (resp. $\overline{x_i}$) then we set the color of the vertices of $P_i^Y$ as $c_i^x$ (resp. $\overline{c}_i^x$). The color of the vertices on the path $P_i^Z$ is set in the same manner. We color the vertices in $P_i^W$ by $c^w_i$, which is different from that of the vertices in $P_i^Y$ and $P_i^Z$.
We create the clause gadget $C_i$ by connecting $y_i^1$ with $w_i^2$ and $z_i^1$ with $w_i^6$ by an edge (see  \Cref{fig:red}).

\smallskip

\noindent {\bf Central Vertex Gadget.}

\noindent We also have a central path $P^v=\langle v_1,v_2, v_3\rangle$. The color of all the vertices in $P^v$ is the same, say $c^v$, which is different from the color of all other vertices in the construction. For each variable gadget $X_i$,  $x_i^1$ and $\overline{x}_i^1$ are connected to the vertex $v_1$ (see  \Cref{fig:red}). For each clause $C_i$, $w_i^4$ is connected with $v_1$. The color of the vertices of $P^v$  is $c^v$.

\end{tcolorbox}

\begin{figure}[h]
\centering
\includegraphics[width=0.8\textwidth]{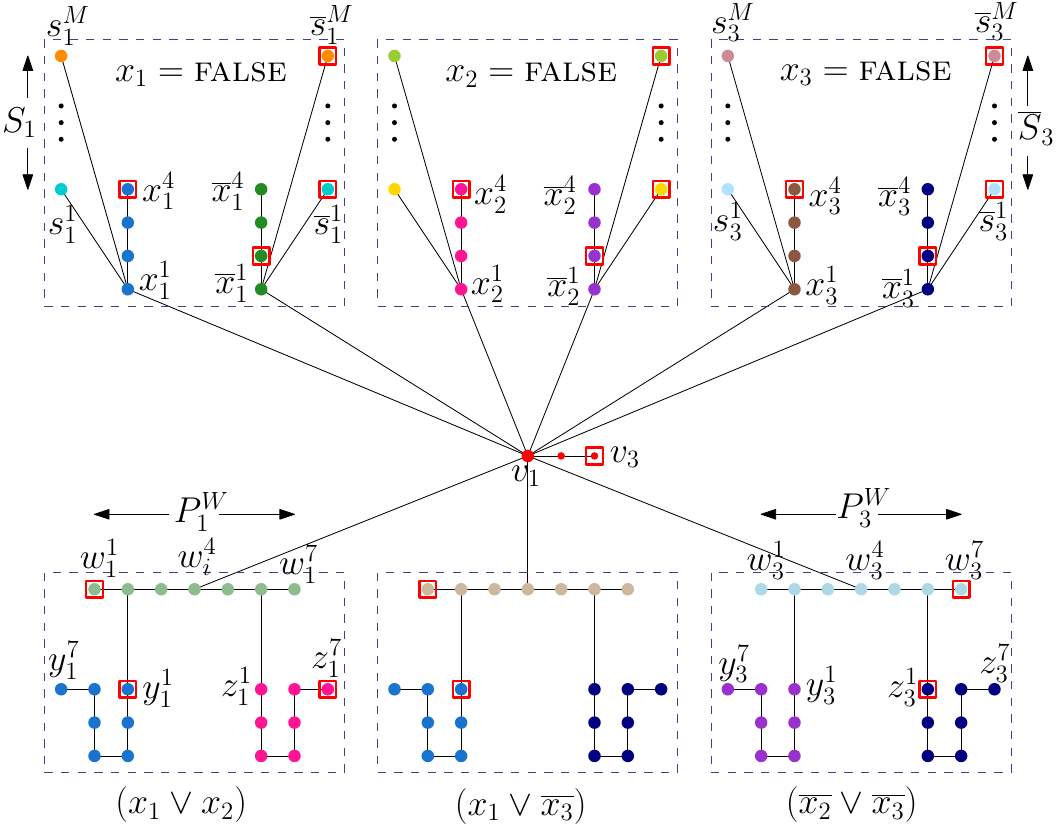}
\caption{An example of the construction of $(T_{\theta}, C_{\theta})$ is shown. For the assignment $x_1=x_2=x_3=\false$, the corresponding consistent subset is indicated with a red box around the vertices. In this assignment, $(x_1 \lor x_2)$ is not satisfied, whereas the rest of the clauses are satisfied.} \label{fig:red}
\end{figure}

Our objective is to show that there exists an \mcs of size at most $N(k)=n(M+2)+2k+3(m-k)+1$ in the tree $T_\theta$ if at least $k$ clauses of $\theta$ are satisfied; otherwise, the size is strictly greater than $N(k)$. We now prove a set of auxiliary claims about a minimum consistent subset for $(T_{\theta}, C_{\theta})$. 

\Cref{lemma:exactlyone}  states that by strategically choosing the vertices in a variable gadget, the vertices of the tree corresponding to that variable gadget can be consistently covered by choosing its only one set of stabilizer vertices.

\begin{lemma} \label{lemma:exactlyone}
For any consistent subset $V_C$ of size at most $N(k)=n(M+2)+2k+3(m-k)+1$ in the tree $T_\theta$, the following are true.
\begin{itemize}
    \item For any variable $x_i$, exactly one of the following is true.
        \begin{itemize}
            \item $S_i\subset V_C$, $\overline{S}_i\cap V_C=\emptyset$, and $x_i^2,\overline{x}_i^4\in V_C$. 
            \item $\overline{S}_i\subset V_C$, ${S}_i\cap V_C=\emptyset$, and $\overline{x}_i^2,x_i^4\in V_C$, 
        \end{itemize} and
    \item $v_3\in V_C$.
\end{itemize}
\end{lemma}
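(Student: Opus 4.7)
The proof rests on the fact that each pair $\{s_i^j,\overline s_i^j\}$ is a color class of size two. Consistency at either vertex of the pair forces at least one of the two into $V_C$, so $|V_C|\ge nM$ already from stabilizers alone. Since $N(k)=nM+(2n+3m-k+1)$, by choosing $M$ strictly larger than $2n+3m+1$ (which the construction is free to do), no variable can afford $2M$ stabilizer vertices in $V_C$; in particular, no index $i$ admits both $S_i\subseteq V_C$ and $\overline S_i\subseteq V_C$.

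Next I establish the per-variable dichotomy. Suppose $\overline s_i^j\notin V_C$ for some $j$; the only same-colored vertex is $s_i^j$, lying at hop-distance $4$ via $\overline s_i^j\!-\!\overline x_i^1\!-\!v_1\!-\!x_i^1\!-\!s_i^j$. Consistency of $\overline s_i^j$ therefore forbids any vertex of $V_C$ from being at hop-distance $1$, $2$, or $3$ from $\overline s_i^j$. Enumerating the hop-balls of radius $\le 3$ shows that all of $\overline x_i^1,\overline x_i^2,\overline x_i^3$, $v_1,v_2$, $x_i^1$, every $\overline s_i^{j'}$ with $j'\ne j$, and the connectors $\overline x_k^1,x_k^1,w_m^4$ (for all other $k,m$) must lie outside $V_C$. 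In particular the entire set $\overline S_i$ is disjoint from $V_C$, and by the pairing argument $S_i\subseteq V_C$: this is Case~A. The symmetric derivation starting from $s_i^j\notin V_C$ yields Case~B, and combined with the budget restriction above every variable lies in exactly one of the two cases.

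From the forbidden lists I can read off the remaining conclusions. The vertex $v_3$ must lie in $V_C$, because the only vertices of color $c^v$ are $v_1,v_2,v_3$, yet $v_1,v_2$ appear in the forbidden list, and leaving $v_3$ out as well would leave $v_1$ with no same-colored vertex anywhere in $V_C$. In Case~A, the vertex $x_i^1$ is forbidden while its only distance-$1$ neighbors in $V_C$ are the stabilizers $s_i^j$, each of a different color; the only way to restore a same-colored nearest neighbor for $x_i^1$ is to place $x_i^2$ in $V_C$. Finally, $\overline x_i^4\in V_C$: after applying the forbidden list within the variable gadget, every $V_C$-vertex lies at hop-distance $\ge 6$ from $\overline x_i^4$ (attained by $s_i^j$, $v_3$, or $x_i^2$), whereas the closest vertex of matching color $\overline c_i^\ell$ outside the literal path sits in a clause path at hop-distance $\ge 8$, so only $\overline x_i^4$ itself can serve as a same-colored nearest neighbor. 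Case~B follows by a symmetric argument.

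The delicate step will be the hop-distance enumeration in the dichotomy paragraph: the claim that the forbidden list is exhaustive requires checking every short path through $v_1$, since a single unaccounted distance-$2$ or distance-$3$ neighbor of $\overline s_i^j$ would break the forced exclusions and collapse the rest of the argument. Once this geometry is verified and $M$ is taken, say, equal to $5(n+m)$, the four membership claims drop out in one or two lines each.
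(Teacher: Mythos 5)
Your proof is correct and takes essentially the same route as the paper's: the unique colour of each stabilizer pair together with the size budget forces exactly one of $S_i,\overline S_i$ into $V_C$, and the distance-$4$ separation between a pair empties the radius-$3$ ball around the unchosen stabilizers, from which the memberships of $x_i^2$, $\overline{x}_i^4$ and $v_3$ follow. Your explicit enumeration of the forbidden ball is, if anything, a slightly more systematic write-up of the same argument.
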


    \begin{proof}
     For a variable $x_i$, let $S_i\cap V_C \neq \emptyset$. Let $s_i^j\in S_i\cap V_C$. But then every vertex $v \in S_i\setminus \{s_i^j\}$ must have a vertex within distance 2 of its own color, since $d(s_i^j,v)=2$. Hence $S_i\subseteq V_C$. One can similarly prove that $\overline{S}_i\cap V_C \neq \emptyset
      \implies \overline{S}_i\subseteq V_C$.
     Also, every variable gadget contains $M$ uniquely colored vertices and hence has at least $M$ vertices in $V_C$. So, if $M>>n$, we have that $N(k)<(n+1)M$, and there exists no variable gadget that contains vertices from both $S_i$ and $\overline{S}_i$. In other words, exactly one of the following holds for every variable gadget corresponding to a variable $x_i$: 
     \begin{itemize}
     \item  $S_i\subset V_C$, $\overline{S}_i\cap V_C=\emptyset$
     \item $\overline{S}_i\subset V_C$, ${S}_i\cap V_C=\emptyset$
     \end{itemize}

     Below, we look into one of these cases, and a similar argument may be made for the other case.

     \noindent\textbf{\underline{Case 1: $S_i\subset V_C$, $\overline{S}_i\cap V_C=\emptyset$}}

     Notice that there must be a vertex in the literal path \{$x_i^1,x_i^2,x_i^3,x_i^4$\} of the variable gadget $X_i$ since $d(S_i,x_i^1)=1$ and the distance to any other vertex of the same color (other than these two vertices) is more than 1. But $x_i^1 \notin V_C$, as $\overline{S}_i\cap V_C=\emptyset$ and $d(x_i^1,\overline{S}_i)< d(S_i,\overline{S}_i)$. Hence $x_i^2 \in V_C$.

     Similarly there must be a vertex in the literal path $\{\overline{x}_i^1,\overline{x}_i^2,\overline{x}_i^3,\overline{x}_i^4\}$ of the variable gadget $X_i$ since $d(S_i,\overline{x}_i^1)=3$ and the distance to any other vertex of the same color (other than $\{\overline{x}_i^2,\overline{x}_i^3,\overline{x}_i^4\}$) is more than 3. And the distance of 4 between $S_i$ and $\overline{S}_i$ eliminates the possibility of any of $\overline{x}_i^1,\overline{x}_i^2$ or $\overline{x}_i^3$ belonging in $V_C$. Thus, $\overline{x}_i^4\in V_C$. 

     \noindent\textbf{\underline{Case 2: $\overline{S}_i\subset V_C$, ${S}_i\cap V_C=\emptyset$}} 
     
     Case 2 may be argued in a manner similar to Case 1.

   Moreover, $V_C$ must contain at least one vertex from the set $\{v_1, v_2, v_3\}$. However, the distance of 4 between $S_i$ and $\overline{S_i}$ rules out the possibility of either $v_1$ or $v_2$ being in $V_C$. Consequently, $v_3 \in V_C$.           
\end{proof}
   
To satisfy the inequality in the above lemma, we now set the value of $M$ as $n^3$.
In the next lemma, we present a bound on the vertices from each clause gadget that are contained in a consistent subset of size at most $N(k)$.
For any clause $C_i$, denote the corresponding clause gadget by $T_i^C=G[\{w_i^a,y_i^a,z_i^a|1\leq a\leq 7\}]$.
    \begin{lemma}
        In any consistent subset $V_C$ of the tree $T_\theta$, for each clause $C_i$, $2\leq |V(T_i^C)\cap V_C|$.\label{lemma:clause2or3}
    \end{lemma}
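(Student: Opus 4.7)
The plan is to argue by contradiction, leveraging the fact that the colour $c_i^w$ appears only on the seven vertices of $P_i^W\subseteq T_i^C$ and nowhere else in $T_\theta$. First I would observe that since $c_i^w$ is unique to $P_i^W$, each vertex $w_i^k$ has all its same-colour candidates inside $P_i^W$; consistency of $V_C$ at $w_i^k$ therefore forces $V_C\cap P_i^W\neq\emptyset$, which already gives $|V_C\cap T_i^C|\ge 1$.

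For the main step, I would assume towards a contradiction that $|V_C\cap T_i^C|=1$ and let $w_i^j$, $j\in\{1,\dots,7\}$, be the unique such vertex. The whole argument then rests on the two ``tip'' vertices $y_i^7$ and $z_i^7$. A direct traversal of the tree $T_\theta$ yields
\[
  d(y_i^7,w_i^j)=7+|j-2|, \qquad d(z_i^7,w_i^j)=7+|j-6|.
\]
I would next claim that the nearest vertex of the same colour as $y_i^7$ (respectively $z_i^7$) in $V_C\setminus T_i^C$ lies at distance at least $11$. Indeed, every same-coloured vertex outside $T_i^C$ sits either on the literal path of the variable associated with $y_i$ (reached by the route $y_i^7\!\to\! y_i^1\!\to\! w_i^2\!\to\! w_i^3\!\to\! w_i^4\!\to\! v_1\!\to\! x^1$ of length $6+1+2+1+1=11$) or on an occurrence path inside another clause gadget $T_k^C$ (reached through $v_1$ at distance $\ge 14$).

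Combining the two ingredients: because $c_i^w$ differs from the colours of $y_i^7$ and $z_i^7$, consistency of $V_C$ at $y_i^7$ forces $7+|j-2|\ge 11$, i.e.\ $j\in\{6,7\}$, while consistency at $z_i^7$ forces $7+|j-6|\ge 11$, i.e.\ $j\in\{1,2\}$. These two conditions are incompatible, producing the desired contradiction, so $|V_C\cap T_i^C|\ge 2$.

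The main technical obstacle I anticipate is justifying the lower bound of $11$ on the external same-colour distance. This requires checking that $v_1$ is the only bridge from $T_i^C$ to the rest of $T_\theta$, that the route just described is optimal, and that no cross-clause shortcut can bring a same-coloured vertex closer than $11$ to a tip of $P_i^Y$ or $P_i^Z$; the last point in turn relies on the convention that the palette $c_i^w$ is distinct across clauses, so that no other $P_k^W$ can play the role of a same-colour landmark for vertices inside $P_i^W$.
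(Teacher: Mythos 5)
Your proof is correct and rests on exactly the same two ingredients as the paper's: the colour $c_i^w$ is unique to $P_i^W$, forcing some $w_i^j\in V_C$, and a distance comparison showing that $w_i^j$ alone cannot serve both occurrence paths, since every external same-coloured vertex is too far away. The paper phrases the second step as a case split on whether $j\le 4$ or $j\ge 4$ (forcing a vertex on $P_i^Y$ or $P_i^Z$ respectively), whereas you derive the incompatible constraints $j\in\{6,7\}$ and $j\in\{1,2\}$ from the two tips simultaneously; the underlying distance facts (and your arithmetic) are the same.
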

    \begin{proof}
There needs to be a vertex among the vertices $\{w_i^a \mid 1 \leq a \leq 7\}$ since they are distinctly colored from all other vertices. If this vertex belongs to $\{w_i^a \mid 1 \leq a \leq 4\}$, then there must also be a vertex in $\{y_i^a \mid 1 \leq a \leq 7\}$ since the nearest vertex of the same color (any $y_i^a$) is farther away than the vertex with the color of any $w_i^a$. Similarly, if this vertex belongs to $\{w_i^a \mid 4 \leq a \leq 7\}$, then there must be a vertex in $\{z_i^a \mid 1 \leq a \leq 7\}$. Therefore, $2 \leq |V(T_i^C)\cap V_C|$.
    \end{proof}

\begin{theorem}
    There exists a truth assignment of the variables in $\theta$ which satisfies at least $k$ clauses if and only if there exists a consistent subset of size at most $N(k)$ for $(T_{\theta},C_{\theta})$.
\end{theorem}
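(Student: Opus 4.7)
The plan is to prove both directions of the equivalence separately, using the structural constraints already pinned down by \Cref{lemma:exactlyone} and \Cref{lemma:clause2or3}.

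\textbf{Forward direction (assignment $\Rightarrow$ consistent subset).} Given a truth assignment satisfying at least $k$ clauses, I would construct $V_C$ explicitly and then check its consistency. For every variable $x_j$ I take the set dictated by \Cref{lemma:exactlyone}: $S_j \cup \{x_j^2, \overline{x}_j^4\}$ if $x_j$ is true, and $\overline{S}_j \cup \{\overline{x}_j^2, x_j^4\}$ otherwise; I also add $v_3$. For each satisfied clause $c_i$ I add only two clause-gadget vertices, chosen according to a true literal: if $y_i=x_j$ is true (so $x_j^2\in V_C$) take $\{w_i^7, z_i^1\}$, and symmetrically $\{w_i^1, y_i^1\}$ if $z_i$ is true. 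For each unsatisfied clause I add three vertices (e.g.\ $\{w_i^4, y_i^1, z_i^1\}$), which independently cover the three paths $P_i^W$, $P_i^Y$, $P_i^Z$. The size is exactly $n(M+2) + 1 + 2k + 3(m-k) = N(k)$. Consistency is verified by a routine per-vertex-class distance check: stabilizers, literal paths, the central path, and each of the three paths of every clause gadget. The crucial point to highlight in this check is that, for a satisfied clause, the two chosen clause-gadget vertices together with the variable vertex $x_j^2$ (or its negated analogue) produce a tie at distance $6$ that covers the uncovered path via a same-colored vertex.

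\textbf{Backward direction (consistent subset $\Rightarrow$ assignment).} Suppose $|V_C|\le N(k)$. By \Cref{lemma:exactlyone} (with $M=n^3$) every variable gadget contributes exactly $M+2$ vertices and $v_3\in V_C$, so the clause gadgets jointly contribute at most $2k+3(m-k)=3m-k$ vertices. I define the assignment by $x_j=\textup{true}$ iff $S_j\subset V_C$. By \Cref{lemma:clause2or3} each clause contributes at least $2$. The heart of the argument is the additional claim: \emph{if $c_i$ is unsatisfied under this assignment, then $|V_C\cap V(T_i^C)|\ge 3$}. Granted this, if $s$ denotes the number of satisfied clauses, then $2s+3(m-s)\le 3m-k$, so $s\ge k$.

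\textbf{Main obstacle.} The real work is the additional claim above, which I would prove by contradiction. Assume $c_i$ is unsatisfied but $|V_C\cap V(T_i^C)|=2$. By \Cref{lemma:clause2or3} these two vertices are one $w_i^a\in P_i^W$ together with a vertex from exactly one of $P_i^Y, P_i^Z$, say $\{w_i^a, y_i^b\}$. Then every vertex of $P_i^Z$ must have a same-colored nearest neighbor coming from outside the clause gadget. Since $c_i$ is unsatisfied, $z_i$ is false, and so the only same-colored vertex of $z_i^1$ appearing in $V_C$ from the underlying variable gadget lies at distance at least $8$ (namely the ``far'' literal-path vertex $x_k^4$ or $\overline{x}_k^4$). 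On the other hand $w_i^a$ sits at distance at most $6$ from $z_i^1$ and has a different color, so $z_i^1$ is forced to have a strictly closer differently-colored nearest neighbor, contradicting consistency. The case $\{w_i^a, z_i^b\}$ is symmetric. A secondary point I would still need to certify is that my forward-direction choices for different clauses do not interfere through the hub $v_1$; this reduces to the observation that distinct clause paths $P_i^W$ and $P_j^W$ meet only via $v_1$ at distance $2$, so the tie patterns established within each clause gadget (which depend only on within-clause distances and on the fixed distance $6$ from the chosen literal-path vertex $x_j^2$) are preserved globally.
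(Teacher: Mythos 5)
Your overall strategy is the same as the paper's: an explicit construction of size exactly $N(k)$ for the forward direction, and for the backward direction a counting argument combining \Cref{lemma:exactlyone}, \Cref{lemma:clause2or3}, and a ``two versus three vertices per clause gadget'' dichotomy driven by the distance gap of $8$ versus $6$. Your backward direction is in fact stated a little more cleanly than the paper's (you fix the assignment first and prove that unsatisfied clauses force three gadget vertices, rather than arguing via the contrapositive), and the key distance computation ($d(z_i^1,w_i^a)\le 6$ while every same-colored vertex available to $z_i^1$ lies at distance $\ge 8$) is the right one. You should just also note that same-colored occurrence vertices in \emph{other} clause gadgets are likewise at distance $\ge 8$ (both paths reach $v_1$ only after $4$ edges), and that $x_k^1,x_k^2,x_k^3\notin V_C$ when $x_k$ is false --- this follows from the stabilizer argument inside the proof of \Cref{lemma:exactlyone}, not from its statement.

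The concrete problem is your forward-direction choice $\{w_i^4, y_i^1, z_i^1\}$ for an unsatisfied clause: it is not consistent. The vertex $w_i^1$ has $d(w_i^1,y_i^1)=2$ (via $w_i^2$) but $d(w_i^1,w_i^4)=3$, and no other selected vertex is within distance $2$, so $\NN(w_i^1,V_C)=\{y_i^1\}$, whose color is a literal color rather than $c_i^w$; symmetrically, $w_i^7$ sees $z_i^1$ at distance $2$ but $w_i^4$ at distance $3$. The clause-path representative must be chosen at an endpoint so that it \emph{ties} with the literal-path representatives instead of being strictly dominated by them; the paper uses $\{w_i^1, y_i^1, z_i^7\}$, for which every $w_i^a$ has $w_i^1$ among its nearest selected vertices because all the competing distances coincide. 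This is a local, fixable slip --- the per-vertex distance check you promise would have caught it --- but as written the construction fails, so the choice needs to be corrected before the forward direction goes through.
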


\begin{proof}
$(\Rightarrow)$ For the forward direction, let there exist an assignment $A$ to the variables of $\theta$ that satisfies $k$ clauses. Consider the following set of vertices $V_A$. For each variable $x_i$ if $x_i$ is $\true$, include all the vertices of $S_i$ in $V_A$. Also include $x_i^2$ and $\overline{x}_i^4$ in $V_A$. If $x_i=\false$ then include all the vertices of $\overline{S}_i$ in $V_A$. Also include $x_i^4$ and $\overline{x}_i^2$ in $V_A$.

For every satisfied clause $C_i=(y_i\lor z_i)$ with respect to $A$ we include the following vertices in $V_A$. Without loss of generality assume that $y_i=\true$. We include $w_i^7$ and $z_i^1$ in $V_A$.
For every unsatisfied clause $C_i=(y_i\lor z_i)$, we include  $w_i^1$, $y_i^1$ and  $z_i^7$ in $V_A$. We also include $v_3$ in $V_A$.

Observe that the cardinality of $V_A$ is $N(k)=n(M+2)+2k+3(m-k)+1$. Next, we prove that $V_A$ is a consistent subset for $T_{\theta}$. Observe that for any pair of vertices $(s_i^j,\overline{s}_i^j)$, exactly one of them is in $V_A$. Without loss of generality assume that $s_i^j\in V_A$. Observe that $d(s_i^j,\overline{s}_i^j)=d(\overline{s}_i^j,V_A)=4$. If $x_i=\true$ then $d(x_i^j,x_i^2)\leq d(x_i^j,V_A)$ and $d(\overline{x}_i^j,\overline{x}_i^4)\leq d(\overline{x}_i^j,V_A)$. The case where $x_i=\false$ is symmetric.

For any clause gadget either $w_i^1$ or $w_i^7$ is in $V_A$. Without loss of generality assume that $w_i^1\in V_A$. Observe that for every vertex $w_i^j$, $d(w_i^j,w_i^1)=d(w_i^j,V_A)$. Let $C_i=(y_i\lor z_i)$ be a satisfied clause and without loss of generality assume that $y_i=x_j=\true$. Observe that $d(y_i^1,x_j^2)=6$, and $d(y_i^3,V_A)=6$, and $d(y_i^a,x_j^2)=d(y_i^a,V_A)$.
For any unsatisfied clause $C_i=(y_i\lor z_i)$ observe that $d(y_i^a,x_j^2)=d(y_i^a,V_A)$. Also for any $v_j$ where $1\leq j\leq 3$, $d(v_j,v_3)=d(v_j,V_A)$. Therefore $V_A$ is a consistent subset for $T_{\theta}$. 
    
    \smallskip
\noindent $ (\Leftarrow) $
In the backward direction, let there be a consistent subset $V_C$ of size at most $N(k)$ for $(T_{\theta},C_{\theta})$. We know from \Cref{lemma:exactlyone} that either $S_i\subset V_C$ or $\overline{S}_i\subset V_C$. From \Cref{lemma:exactlyone}, any such solution has at least $n(M+2)+1$ vertices from $V_C$ outside the clause gadgets, leaving at most $2k+3(m-k)$ that may be chosen from the clause gadgets.

Each clause gadget comprises of vertices of three distinct colors: one color exclusive to the clause itself and two colors dedicated to literals. An essential insight is that if there are no vertices in $V_C$ of colors specific to the literals from a clause in the variable gadgets, then such a clause gadget must contain at least three vertices from $V_C$. This assertion is valid because the distance between two sets of vertices of the same color (corresponding to the same literal in two clauses) across any two clauses is at least 8, while vertices in $V_C$ of clause-specific colors are at a distance of at most 6.

This fact, coupled with \Cref{lemma:clause2or3}, implies that there are at least $k$ clauses for whom colors specific to at least one of their literals have the vertices in $V_C$ of the same color from the variable gadgets. Making the same literals true and setting other variables arbitrarily gives us an assignment that satisfies at least $k$ clauses.  
\end{proof}

\section{\mcs on Trees: A Parameterized Algorithm} \label{fpt}

This section considers the optimization version of the MCS problem for the trees. We provide a parameterized algorithm for the computation of MCS for trees.

\begin{definition}\textbf{Fixed-Parameter Tractable Time}\\ 
A problem is said to be solvable in \emph{fixed-parameter tractable (FPT) time} with respect to a parameter $k$ if it can be solved in time $f(k) \cdot n^{\mathcal{O}(1)}$, 
where $f$ is a computable function depending only on $k$, and $n$ is the size of the input \cite{book}. 
\end{definition}

In our setting, the parameter is $c$, the number of color classes, and our algorithm runs in time $\mathcal{O}(2^{6c}n^6)$, which is of the form $f(c) \cdot n^{\mathcal{O}(1)}$ and hence is FPT.

\defparproblem{\textsc{Minimum Consistent Subset for Trees} }
{A rooted tree $T$, whose vertices $V(T)$ are colored with a set $C$ of $c$ colors.}
{$c$}
{Find the minimum possible size of a consistent subset (\mcs) for $T$?}

We consider $T$ a rooted tree by taking an arbitrary vertex $r$ as its root. We use $V(T')$ to denote the vertices of a subtree $T'$ of $T$, and $C(U) \subseteq C$ to denote the subset of colors assigned to the subset of vertices $U  \subseteq V$, and $C(u)$ to denote the color attached to the vertex $u \in V(T)$. For any vertex $v$, let $\eta_v$ denote the number of children of $v$  and we denote the children of $v$ by $v_1,v_2,\cdots, v_{\eta_v}$. We denote the subtree rooted at a vertex $v$ by $T(v)$. For any vertex $v$ and any integer $i<\eta_v$, we use $T_{i+}(v)$ to denote the union of subtrees rooted at $v_{i+1}$ to $v_{\eta_v}$, and $T_i(v)$ to denote the subtree rooted at $v$ and containing first $i$ many children of $v$. Thus, $T_{i+}(v)=\cup_{i+1\leq j\leq \eta_v}T(v_j)$, which is a forest, and $T_i(v)=T(v)\setminus T_{i+}(v)$. In \Cref{fig:dpp}(a), the light yellow part is $T_i(v)$, and the light sky-colored part is $T_{i+}(v)$. We define $T^{out}(v)=T\setminus T(v)$. See \Cref{fig:dpp}a.

\begin{figure}[!htb]
    \centering
    \begin{minipage}{.4\textwidth}
        \centering
        \includegraphics[scale=.8]{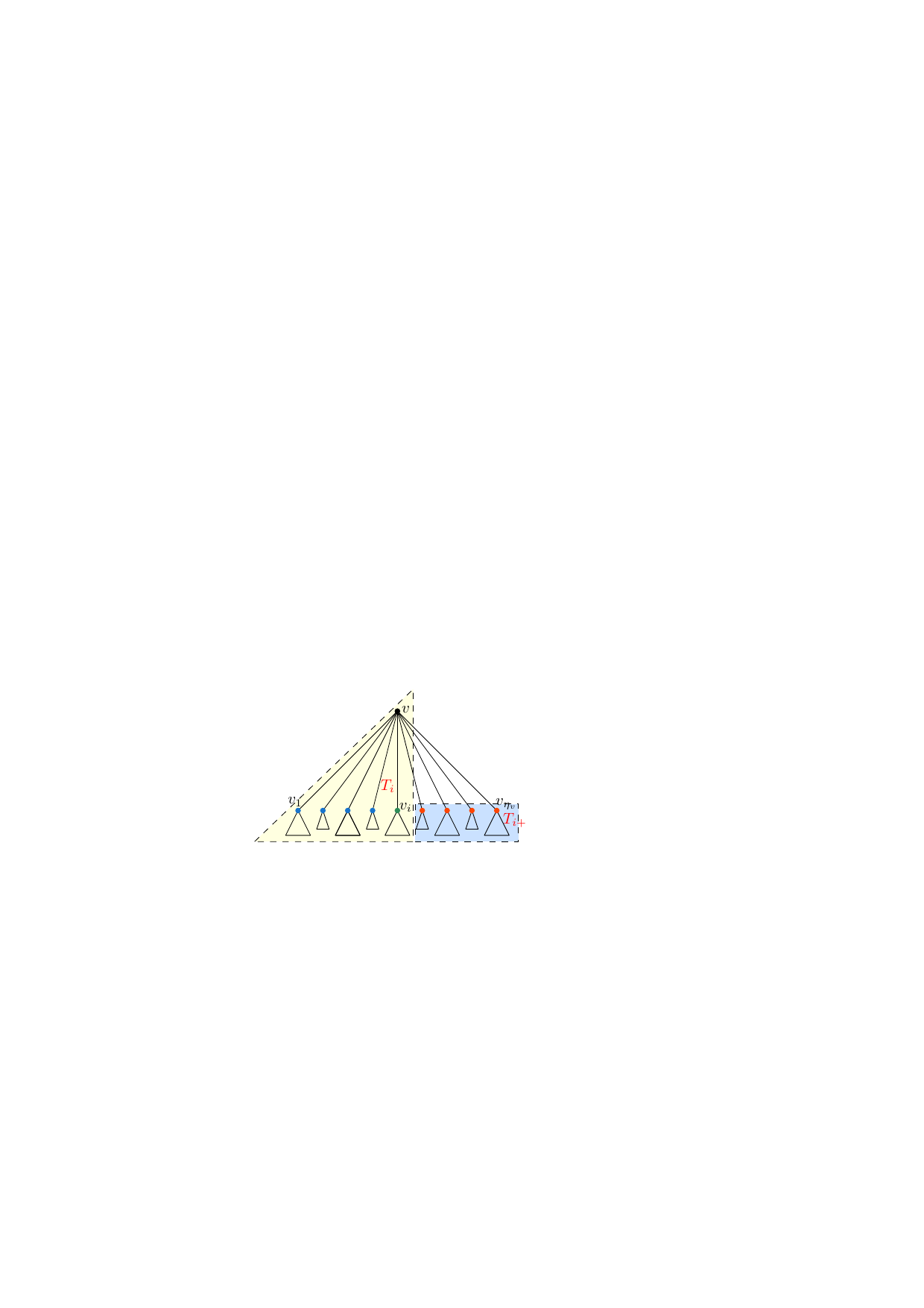}\\
            {\small (a)}\\
    \end{minipage}%
    \begin{minipage}{0.03\textwidth}

    \end{minipage}
    \begin{minipage}{0.57\textwidth}
        \centering
        \includegraphics[scale=0.4]{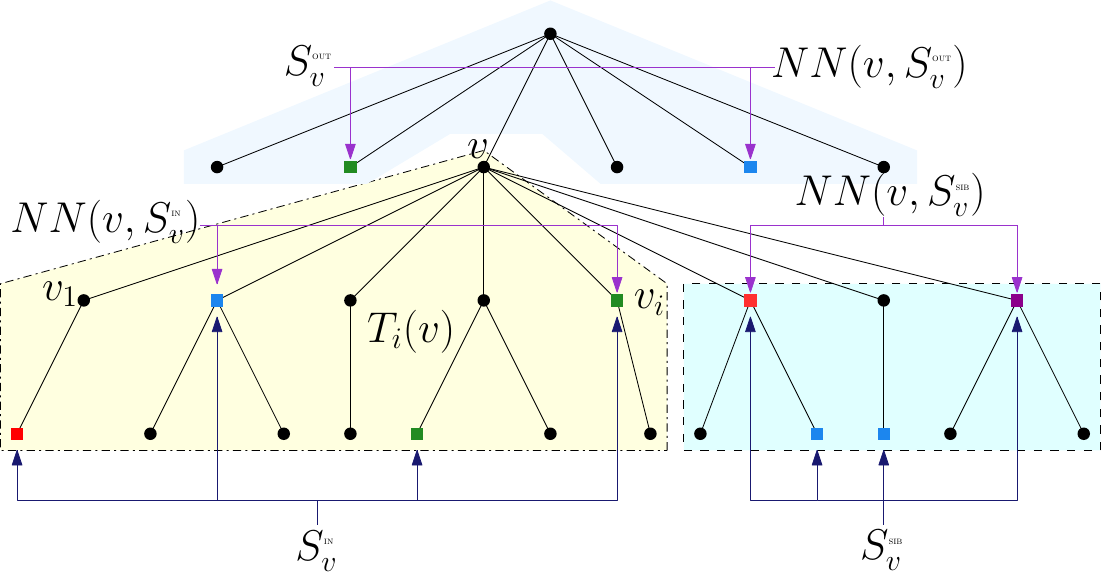}\\
            {\small (b)}\\
    \end{minipage}
    \caption{Illustration of the bottom-up dynamic programming routine. $\blacksquare$ vertices denote the consistent subset. $\dinvs = 1$, $\doutvs = 2$, and $\dsibvs = 1$. (a) Example of subtrees $T_i(v)$ and $T_{i+}(v)$ for each $v \in V(T)$. (b) Examples of $\svin$, $\svsib$, and $\svout$, along with their corresponding sets of nearest neighbors from vertex $v$: $\NN(v,\svin)$, $\NN(v,\svsib)$, and $\NN(v,\svout)$, respectively.}

  \label{fig:dpp}
\end{figure}

For any positive integer $d$ and for any vertex $v\in V(T)$, a set of vertices $U\subset V(T)$ is called $d$-equidistant from $v$ if $d(u_i,v)=d$ for all $u_i\in U$. Any subset of vertices $U$ spans a set of colors $C' \subseteq C$ if $C(U)=C'$. 

For any vertex $v\in V(T)$, we use $\Esib(v,d,C')$ (resp. $\Eout(v,d,C')$) to denote the set of subsets of vertices in $T_{i+}(v)$ (resp. $T\setminus T(v)$), which are $d$-equidistant from $v$ and span the colors in $C'$. 
Next, we define a ({\em partial}) {\em consistent subset} for a subtree $T_i(v)$.

\noindent{\bf Intuition:}
Our dynamic programming (DP) routine exploits the key observation that a (partial) consistent subset (formally defined below) for a subtree $T(v)$ can be computed in FPT time. This computation is possible given the distance to the closest vertex in the consistent subset that lies outside $T(v)$ and the colors of those vertices. The entries in our DP table store the minimum size of partial consistent subsets for all subtrees $T_i(v)$. These subsets are defined based on six parameters: distance to the closest vertex from $v$ in the consistent subset in $T_i(v)$, $T^{out}(v)$ and $T_{i+}(v)$ and colors of these three set of closest vertices.

\begin{definition}
Let $d^{\inn}\in\mathbb{Z}_0^+$ and $d^{\out},d^{\sib}\in\mathbb{Z}^+$, and let three subsets of colors $C^{\inn},C^{\out},C^{\sib} \subseteq C$. A (partial) consistent subset of the subtree $T_i(v)$ with respect to the parameters $d^{\inn}$, $d^{\out}$, $d^{\sib}$, $C^{\inn}$, $C^{\out}$, $C^{\sib}$ is defined as a set of vertices $W\subseteq V(T_i(v))$ such that for any arbitrary subset $X\in \Esib(v,d^{\sib},C^{\sib})$ and $Y\in\Eout(v,d^{\out},C^{\out})$ (assuming they exist), $W$ satisfies the following (see \Cref{fig:dpp}(b)): 

\begin{mdframed}[backgroundcolor=red!10,topline=false,bottomline=false,leftline=false,rightline=false] 
\begin{itemize}
    \item $d(v,W)=d^{\inn}$. {(i.e., the distance of $v$ to its nearest member(s) in $W$ is $d^{\inn}$)}
    \item $C(\NN(v,W))=C^{\inn}$. {(i.e., $C^{\inn}$ is the set of colors of the nearest members of $v$ in the set $W$)} 
    \item For every vertex $u\in T_i(v)$, $C(u)\in C(\NN(u,W\cup X\cup Y))$. 
\end{itemize}
\end{mdframed}\label{def1}
\end{definition}

Note that for some values of $d^{\inn},d^{\out},d^{\sib},C^{\inn}, C^{\out},C^{\sib}$ there may not exist any (partial) consistent subset for $T_i(v)$; in such a case we set it as undefined. Also note that, for some values, the (partial) consistent subset can be empty as well, such as when $d^{\inn}= \infty$ and $C(u)\in C(\NN(u, X\cup Y))$ for every vertex $u\in T_i(v)$. For ease of notation, we will denote a (partial) consistent subset for $T_i(v)$ as a consistent subset with respect to the parameters $d^{\inn},d^{\out},d^{\sib},C^{\inn}, C^{\out},C^{\sib}$. 

Consider an arbitrary consistent subset $S_T$ of $T$, an arbitrary vertex $v\in V(T)$ and an integer $i\in[\eta_v]$ (see \Cref{fig:dpp}(b)).
For any vertex $v\in V(T)$ and $1\leq i\leq \eta_v$, define  
$\svin=S_T\cap V(T_i(v))$, $\svsib=S_T\cap V(T_{i+}(v))$, and $\svout=S_T\cap V(T\setminus T(v))$. Also define $\dinvs =d(v,\svin)$, $\cinvs= C(\NN(v,\svin))$, $\dsibvs=d(v,\svsib)$, $\csibvs=C(\NN(v,\svsib))$, $\doutvs =d(v,\svout)$, $\coutvs= C(\NN(v,\svout))$. Let $W$ be any arbitrary (partial) consistent subset with respect to the parameters $\dinvs,\doutvs,\dsibvs,\cinvs,\coutvs,\csibvs$ (see Definition \ref{def1}). Next, we have the following lemma.

\begin{lemma}
    $S_W=(S_T\setminus \svin)\cup W$ is a consistent subset for $T$. \label{claim:DP}
\end{lemma}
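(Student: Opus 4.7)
The plan is to verify, for every vertex $u \in V(T)$, that $C(u) \in C(\NN(u, S_W))$. Since $V(T) = V(T_i(v)) \sqcup V(T_{i+}(v)) \sqcup V(T^{out}(v))$, I would split on whether $u$ lies in $T_i(v)$ or outside it, and in both regimes exploit the fact that for $u$ and $z$ lying on opposite sides of $v$ the tree forces $d(u,z) = d(u,v) + d(v,z)$.

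For $u \in V(T_i(v))$, I would invoke \Cref{def1} for $W$ with the particular witnesses $X_0 := \NN(v, \svsib)$ and $Y_0 := \NN(v, \svout)$. These genuinely lie in $\Esib(v,\dsibvs,\csibvs)$ and $\Eout(v,\doutvs,\coutvs)$, respectively, since their vertices sit at distances exactly $\dsibvs$ and $\doutvs$ from $v$ by the definition of nearest neighbour and span exactly the color sets $\csibvs, \coutvs$ by the parameter definitions. \Cref{def1} then delivers $C(u) \in C(\NN(u, W \cup X_0 \cup Y_0))$. To conclude, I would argue that this color set equals $C(\NN(u, S_W))$: any $z \in \svsib \setminus X_0$ satisfies $d(v,z) > \dsibvs$ and reaches $u$ only via $v$, so $d(u,z) > d(u,X_0)$ and $z$ contributes no new nearest neighbor to $u$; the symmetric statement handles $\svout$ versus $Y_0$.

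For $u \notin V(T_i(v))$, I would strengthen the target to $C(\NN(u, S_W)) = C(\NN(u, S_T))$ and then invoke consistency of $S_T$. Every path from $u$ to $V(T_i(v))$ passes through $v$, so $d(u,W) = d(u,v) + \dinvs = d(u,\svin)$, $\NN(u,W) = \NN(v,W)$, and $\NN(u,\svin) = \NN(v,\svin)$. The parameter definitions together with the partial consistency of $W$ then force $C(\NN(u,W)) = \cinvs = C(\NN(u,\svin))$. Because $S_W$ and $S_T$ agree on $\svsib \cup \svout$, the minimum distances from $u$ to $S_W$ and to $S_T$ coincide, and the color sets contributed by each of the three parts match, yielding the desired equality.

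The main obstacle will be the first case: showing that replacing the potentially large sets $\svsib, \svout$ by their thin front layers $X_0, Y_0$ inside \Cref{def1} does not alter $u$'s nearest-neighbor color set. This collapse is exactly what the tree structure, specifically the absence of any path from $T_i(v)$ to its complement that bypasses $v$, is there to guarantee.
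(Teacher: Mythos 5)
Your proof is correct and follows essentially the same route as the paper's: the same case split on $V(T_i(v))$ versus its complement, the same witnesses $\NN(v,\svsib)$ and $\NN(v,\svout)$ fed into Definition~\ref{def1}, and the same reliance on every path crossing the boundary of $T_i(v)$ passing through $v$. The only difference is presentational — you establish the relevant nearest-neighbor sets by direct equalities, whereas the paper argues by contradiction via its auxiliary sets $A$ and $B$.
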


\begin{proof}
Suppose that $A= W\cup \NN(v,\svsib) \cup \NN(v,\svout)$  is the set of vertices that are either in $W$ or in the nearest neighbor of $v$ outside $T_i(v)$ in $S_W$. We will show that for any vertex $u\in T_i(v)$, $NN(u,S_W)\subseteq A$, and there is a vertex in $NN(u,A)$ of the same color as $u$. Similarly, let $B=(S_W\setminus W) \cup \NN(v,W)$. We show that for any vertex $w$ outside $T_i(v)$, $NN(w,S_W)\subseteq B$, and there is a vertex in $NN(w,B)$ of the same color as  $w$. Please note that $A$ and $B$ are not necessarily disjoint.

Consider a vertex $u\in T_i(v)$ and  $w \in T\setminus T_i(v)$. Since $\NN(v$ $,\svsib)$ $\in $ $\Esib(v,$ $\dsibvs,$ $\csibvs)$, $\NN(v,\svout) \in \Eout(v,\doutvs,\coutvs)$, and $W$ is a consistent subset with respect to the parameters $\dinvs,\doutvs,\dsibvs,\cinvs,\coutvs,\csibvs$,  we have $C(u)\in C(\NN(u,W\cup \NN(v,\svsib) \cup \NN(v,\svout)))= C(\NN(u,A))$. Also, as $S_T$ is a consistent subset, we have $C(w)\in C(\NN(w, S_T\setminus \svin)\cup \cinvs)$. From the properties of $W$, we have $C(\NN(v,W))=\cinvs$. Hence, $C(w)\in C(\NN(w, B))$. Thus, it is enough to show that (i) no vertex from $B\setminus A$ can be the closest to the vertex $u$ in the set $S_W$, and (ii) no vertex from $A\setminus B$ can be the closest vertex of $w$ in the set $S_W$.  We prove these two claims by contradiction. 

Assume that \textit{Claim (i)} is false. Then, there will be a vertex $x\in B\setminus A$, which is closest to $u$. Note that, $(B\setminus A)\cap ((T_i(v)\cup \NN(v,\svsib) \cup \NN(v,\svout))= \emptyset$. Hence we have $d(u,x)=d(u,v)+d(v,x)$, $d(v,x)>min(\dsibvs,\doutvs)$. This is a contradiction as the closest vertex from $v$ in $S_W \cup (T\setminus T_i(v))$ (if it exists) is at distance $min(\dsibvs,\doutvs) =d(v,\NN(v,\svsib) \cup \NN(v,\svout))$. Hence, $x$ cannot be the closest vertex of $u$ in $S_W$. Thus, \textit{Claim (i)} follows.

Now, assume that \textit{Claim (ii)} is false. Then there is a vertex $y \in A\setminus B$, which is closest to $w$. As $w \not\in T_i(v)$ and $y \in T_i(v)$, we have $d(w,y)=d(w,v)+d(v,y)$. Since $d(v,\NN(v,W))=\dinvs$ and $w\in (A\setminus B = W\setminus \NN(v,W))$, we have $d(v,y)>\dinvs$. This contradicts the fact that the closest vertex from $v$ in $S_W \cup T_i(v)$ (if it exists) is at distance $\din$ from $v$. Hence, \textit{Claim (ii)} is true.
\end{proof} 

Motivated by \Cref{claim:DP}, we design the following algorithm based on the dynamic programming technique. For each choice of $v\in V(T)$, $i\in[\eta_v]$, $\dinv \in [n] \cup \{0,\infty\}$, $\doutv \in [n]\cup \{\infty\}$, $\dsibv \in [n]\cup \{\infty\}$, and $\cinv,\coutv,\csibv \subseteq C$, we define a subproblem which computes the cardinality of a minimum sized (partial) consistent subset for the subtree $T_i(v)$ with respect to the parameters $\langle \dinv,\doutv,\dsibv,\cinv,\coutv,\csibv \rangle$, and 
denote its size by $P(T_{i}(v), $ $\dinv, $ $\doutv,\dsibv,\cinv,\coutv,\csibv)$.

Let us use $\dminv=\min(\dinv,\doutv,\dsibv)$.
\begin{align*}
A &= 
\begin{cases}
  \cinv   & \text{if } \dinv = \dminv\\
  \emptyset & \text{otherwise}
\end{cases},
&
B &=
\begin{cases}
  \csibv  & \text{if } \dsibv = \dminv\\
  \emptyset & \text{otherwise}
\end{cases},
&
D &=
\begin{cases}
  \coutv  & \text{if } \doutv = \dminv\\
  \emptyset & \text{otherwise}
\end{cases}
\end{align*}

We define $\cminv=A\cup B\cup D$.
{Note that $\dminv$ is the distance of the closest vertex to $v$ in any $X \in \Esib(v,\dsibv,\csibv)$, any $Y\in \Eout(v,\doutv,\coutv)$ or the consistent subset, and that $\cminv$ denotes the colors of all such vertices.} 

To compute any DP entry, we take into account the following six cases. The first two cases are for checking whether a DP entry is valid. The third case considers the scenario in which $v$ is part of the solution; the fourth, fifth, and sixth cases collectively consider the scenario in which $v$ is not in the solution.

\begin{description}
\item[Case 1:] If $C(v) \notin \cminv$, return undefined. 

\item[Case 2:] $\dinv=0$ and $\cinv\neq\{C(v)\}$. Return $\infty$.

\item[Case 3:] $\dinv=0$ and  $\cinv=\{C(v)\}$.  
Return  
\[
\begin{aligned}
P(T_{i}(v),\dinv,\doutv,\dsibv,\cinv,\coutv,\csibv)
&= 1 + \sum_{1\leq j\leq i} 
    \biggl\{
    \min_{\delta,C'} 
        P\bigl(
            T_{\eta_{v_j}}(v_j),\delta,1,\\
&\hspace{3.0cm}
            \infty, C',\{C(v)\},\emptyset
        \bigr)
    \biggr\}
\end{aligned}
\]

\end{description}
\begin{figure}[t]
                \centering
                \includegraphics[scale=1.3]{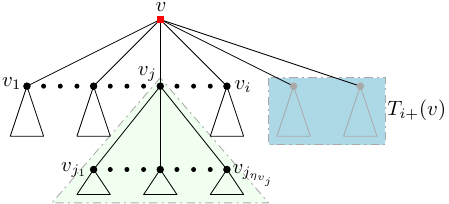}
                \caption{Illustration of the Case 3, where $\dinv=0$}
                \label{fig:dpatv}
\end{figure}

\noindent {\bf Explanation:} Case 1 and Case 2 are self-explanatory. Case 3 implies that the vertex $v$ is included in the consistent subset. Consequently, for the optimal solution, we need to determine a consistent subset for each tree rooted at a child $v_j$ of $v$, independently of each other, assuming that $v$ is part of the consistent subset (refer to \Cref{fig:dpp}(a)). For every child $v_j$ of $v$, we iterate through all possible choices of $C'\subseteq C$ and $\delta^{\inn}_{v_j}=\delta\in \{1, \ldots, h(T(v_j)\}\cup \{\infty\}$ where $h(T(v_j))$ is the height of the tree rooted at $v_j$, to identify the minimum consistent subset for $T_{\eta_{v_j}}(v_j)$. This is done with the constraints that the closest vertex in the consistent subset inside $T_{\eta_{v_j}}(v_j)$ is at a distance of $\delta$ and spans $C'$. For any vertex in $T(v_j)$, the path of the closest vertex of its own color outside $T_i(v)$ has to pass through $v$, which is considered to be in the consistent subset and has color $C(v)$. Thus, $\doutv=1$ is taken for the tree $T_{\eta_{v_j}}(v_j)$. Since we are solving for the complete tree rooted at $v_j$ with no siblings, we set the distance to the closest sibling vertex as $\dsibv=\infty$ and the corresponding color set as $\emptyset$.

\noindent {\bf Notations for Subsequent Cases:}
In the rest of the section, we consider three more cases where $\dinv>0$, and hence $\dinv,\doutv,\dsibv>0$. Intuitively, while solving the problem recursively, we will recursively solve $\mcs$ in $T_{i-1}(v)$ and $T_{\eta_{v_i}}(v_i)$. We try all possible sets of choices of $C_a, C_b$ with $\cinv=C_a\cup C_b$, and recursively solve for a solution assuming that the nodes of colors in $C_a$ are present in $T_{i-1}(v)$ at a distance of $\dinv$ (if $C_a\neq\emptyset$ and such choices are feasible) and nodes of colors in $C_b\subseteq\cinv$ are present in $T_{\eta_{v_i}}(v_i)$ at a distance of $\dinv-1$ from $v_i$ (if $C_b\neq\emptyset$ and such choices are feasible). 

\begin{figure}[h]
                \centering
                \includegraphics[scale=1.4]{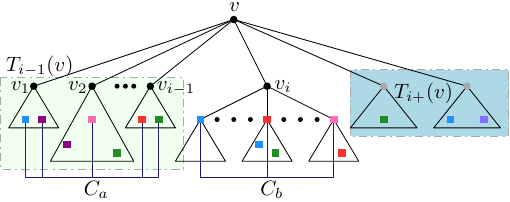}
                \caption{Illustration of the Case 4}
                \label{fig:dpnonempty}
\end{figure}
\begin{description}
\item[Case 4:] $C_a,C_b\neq \emptyset$. 
{
In this case, the closest vertices in the consistent subset from $v$ in both $T_{i-1}(v)$ and $T_{\eta({v_i})} (v_i)$ are located at a distance of precisely $\dinv$.} We start by defining the following. Let $\delta_x = \min(\dinv,\dsibv)$ and recall $\dminv=\min(\dinv,\doutv,\dsibv)$.

Observe that both $T_{i+}(v)$ and $T_{\eta_{v_i}}(v_i)$ contains siblings of $T_{i-1}(v)$. Thus, in a hypothetical consistent subset, which is compatible with the current partial consistent subset, for the tree $T$,  the closest vertices from $v$ in  $T_{(i-1)+}(v)$ are either in  $T_{i+}(v)$ or $T_{\eta_{v_i}}(v_i)$. Here, $\delta_x $ is trying to capture this distance information, and $C_{i-1}^{\sib}$ represents the colors of such vertices.
Similarly, from $v_i$ in a hypothetical consistent subset $CS$ for $T$, which is compatible with the current partial consistent subset, $\dminv+1$ denotes the distance to the vertices in $CS$ contained in $T\setminus T(v_i)$. Note that these vertices can be either in $T_{(i-1)}(v)$ or in $T_{i+}(v)$ or in $T\setminus T(v)$. $C_i^{\out}$ represents the colors of such vertices.

\begin{align*}
C_{i-1}^{\sib} &= 
  \begin{cases}
    C_b & \textrm{ if  }\dinv< \dsibv \textrm{ and  } C_b\neq \emptyset \\
    C_b \cup \csibv & \textrm{ if  }\dinv= \dsibv\\
    \csibv & \text{ otherwise }
  \end{cases}
\end{align*}

Also, define $C_i^{\out}=A\cup B\cup D$, where

\[
\begin{alignedat}{2}
A &= 
\begin{cases}
    \C_a &\text{if }\dinv = \dminv\\
    \emptyset &\text{otherwise}
\end{cases},
&\quad
B &= 
\begin{cases}
    \csibv &\text{if }\dsibv = \dminv\\
    \emptyset &\text{otherwise}
\end{cases}, \\[1ex]
D &= 
\begin{cases}
    \coutv &\text{if }\doutv = \dminv\\
    \emptyset &\text{otherwise}
\end{cases}.
\end{alignedat}
\]

Now we can safely assume that there is a  $\delta_x$-equidistant set from $v$ contained in $T_{(i-1)+}(v)$ that spans $C_{i-1}^{\sib}$. 

\[
\begin{aligned}
&\text{Return } P(T_{i}(v),\dinv,\doutv,\dsibv,\cinv,\coutv,\csibv) \\
&\quad = 
\min_{C_a,C_b} \Bigl(
    P\bigl( 
        T_{i-1}(v),\dinv,\doutv,\delta_x,C_a,\coutv,C_{i-1}^{\sib} 
    \bigr) \\
&\qquad\qquad
    + P\bigl(
        T_{\eta_{v_i}}(v_i),\dinv-1,\dminv+1,\infty,
        C_b, C_i^{\out}, \emptyset
    \bigr)
\Bigr)
\end{aligned}
\]
\end{description}

\noindent {\bf Explanation:} In this case we iterate over all possible choices of $C_a$ and $C_b$, assuming $C_a, C_b\neq \emptyset$ and $C_a\cup C_b=\cinv$. In the first part of the recursive formula, we recursively solve the problem for the tree $T_{i-1}(v)$ with the restriction that we have to include a set of vertices of color $C_a$ in $T_{i-1}(v)$ at distance $\dinv$ from $v$ (see \Cref{fig:dpp}(b)). The restriction on $\coutv$ and $\doutv$ among the vertices in $T\setminus T(v)$ remains the same as that of the parent problem. Regarding $\csibv$ and $\dsibv$,  observe that vertices in $T(v_i)$ and $T_{i+}$ are part of $T_{(i-1)+}$. Therefore the parameters for the sibling depend on the value of $\dinv$ and $\dsibv$, and accordingly, we have defined $C_{i-1}^{\sib}$.

In the second part of the recursive formula, we are solving the problem recursively for the tree $T_{\eta_{v_i}}(v_i)$, with the restriction that, in the consistent set in the consistent set we have to include a set of vertices from $T_{\eta_{v_i}}(v_i)$ which are of colors $C_b$, and at distance $\dinv-1$ from $v_i$. Observe that the vertices in $T_{i-1}(v)$, $T_{i+}(v)$ and $T\setminus T(v)$ are all outside $T(v_i)$. Thus the restriction on the distance to the vertices on the consistent subset outside $T(v_i)$ and their colors depend on the values of $\dinv,\dsibv$ and $\doutv$. Thus the distance $\doutv$ of this subproblem is defined as $\dminv=\min(\dinv,\dsibv,\doutv)$, and the set of colors $C_i^{\out}$ is defined accordingly. As we are solving for the whole tree rooted at $v_i$, there are no siblings; so $\dsibv=0$ and $C_i^{\sib}=\emptyset$.

\begin{description}
\item[Case 5:] $C_a=\emptyset$ and $C_b=\cinv$.
{Note that in this case, the closest vertices in the consistent subset from $v$ in $T_{\eta(v_i)}(v_i)$ are located at a distance of $\dinv$ while in $T_{i-1}(v)$, they are located at a distance of at least $\delta \geq \dinv+1$}

We iterate over all values $\delta>\dinv$ and all possible choices of colors to find the size of a minimum consistent subset. 
We define $\delta_x$ and $C_{i-1}^{\sib}$ the same as in Case 4. 
For any values $\delta>\dinv$ and $C\subseteq [c]$, we define $\dminv(\delta,C)=\min(\delta,\dsibv,\doutv)$, and $C_i^{\out}(\delta,C)=A\cup B\cup D$ where

\[
\begin{alignedat}{2}
A &= 
\begin{cases}
    \C & \text{if }\delta = \dminv\\
    \emptyset & \text{otherwise}
\end{cases},
&\quad
B &= 
\begin{cases}
    \csibv & \text{if }\dsibv = \dminv\\
    \emptyset & \text{otherwise}
\end{cases}, \\[1ex]
D &= 
\begin{cases}
    \coutv & \text{if }\doutv = \dminv\\
    \emptyset & \text{otherwise}
\end{cases}.
\end{alignedat}
\]

From $v_i$ in a hypothetical consistent subset $CS$ for $T$, which is compatible with the current partial consistent subset, $\dminv(\delta,C)$ denotes the distance to the vertices in $CS$ contained in $T\setminus T(v_i)$. Note that these vertices can be either in $T_{(i-1)}(v)$ or in $T_{i+}(v)$ or in $T\setminus T(v)$. $C_i^{\out}(\delta,C)$ represents the colors of such vertices.

\[
\begin{aligned}
& P(T_{i}(v),\dinv,\doutv,\dsibv,\cinv,\coutv,\csibv) \\
&\quad = 
\min_{\delta>\dinv,\, C\subseteq [c]}
\Bigl(
    P\bigl( 
        T_{i-1}(v),\delta,\doutv,\delta_x,C,\coutv,C_{i-1}^{\sib} 
    \bigr) \\
&\qquad\qquad
    + P\bigl(
        T_{\eta_{v_i}}(v_i),\dinv-1,\dminv(\delta,C)+1,\infty,
        C_b, C_i^{\out}(\delta,C), \emptyset
    \bigr)
\Bigr)
\end{aligned}
\]
\end{description}
\noindent {\bf Explanation:} The explanation for this case is the same as Case 4 except for the fact that we have to make sure that the closest vertex chosen in the consistent subset from $T_{i-1}(v)$ is at distance at least $\dinv+1$.

{}

\begin{description}
\item[Case 6:] $C_b=\emptyset$ and $C_a=\cinv$.

Here we consider the case when $C_b=\emptyset$ and $C_a=\cinv$. 
Note that in this case, the closest vertices in the consistent subset from $v$ in $T_{i-1}(v)$ are located at a distance of $\dinv$ while in $T_{\eta(v_i)}(v_i)$, they are located at a distance of at least $\delta \geq \dinv+1$

We define
$\dminv(\delta,C)=\min(\dinv,\dsibv,\doutv)$. We define $C_i^{\out}(\delta,C)=A\cup B\cup D$ where

\begin{align*}
\begin{alignedat}{2}
A &= 
\begin{cases}
    \C_a &\!\text{if }\dinv = \dminv\\
    \emptyset &\!\text{otherwise}
\end{cases},
&\quad
B &= 
\begin{cases}
    \csibv &\!\text{if }\dsibv = \dminv\\
    \emptyset &\!\text{otherwise}
\end{cases}, \\[1ex]
D &= 
\begin{cases}
    \coutv &\!\text{if }\doutv = \dminv\\
    \emptyset &\!\text{otherwise}
\end{cases}.
\end{alignedat}
\end{align*}

For any values $\delta>\dinv$ and $C\subseteq [c]$ we define
$\delta_x(\delta,C)=\min(\dsibv,\delta)$ We define $C_{i-1}^{\sib}(\delta,C)=E\cup F$ where
\begin{align*}
  E = 
  \begin{cases}
    \C  \textrm{ if  }\delta= \delta_x(\delta,C)\\
    \emptyset \textrm{ otherwise  }
  \end{cases}, \text{ } \text{ } \text{ }
  &F = 
  \begin{cases}
    \csibv \textrm{ if  }\dsibv= \delta_x(\delta,C)\\
    \emptyset \text{ } \text{ } \text{ } \textrm{ otherwise  }
  \end{cases}
\end{align*}

The meaning and the reasoning behind the defining of $\dminv(\delta,C)$ and $C_i^{\out}(\delta,C)$ remains the same as in previous cases.  
Thus, in a hypothetical consistent subset, which is compatible with the current partial consistent subset, for the tree $T$,  the closest vertices from $v$ in  $T_{(i-1)+}(v)$ are either in  $T_{i+}(v)$ or $T_{\eta_{v_i}}(v_i)$. Here, $\delta_x(\delta,C) $ is trying to capture this distance information, and $C_{i-1}^{\sib}(\delta,C)$ represents the colors of such vertices.

In this case we return:\\ 

Return $P(T_{i}(v),\dinv,\doutv,\dsibv,\cinv,\coutv,\csibv)=$ 

\hspace*{0pt}\hfill$\>\displaystyle{
\min_{\delta>\dinv, \text{ }C\subseteq [c]}\Bigl(P\Bigl( 
        T_{i-1}(v),\dinv,\doutv,\delta_x(\delta,C),C_a ,\coutv,C_{i-1}^{\sib}(\delta,C) 
        \Bigr)} $
        
        \hspace*{\fill}\>$\displaystyle{+P\Bigl(
        T_{\eta_{v_i}}(v_i),\delta,\dminv+1,\infty,C_b, C, \emptyset
        \Bigr)\Bigl)}$

\end{description}
\noindent {\bf Explanation:} The explanation for this case is the same as the previous two cases.

\noindent{\bf Running time of the algorithm} 
The total number of choices of $\dinv, $ $ \doutv, $ $ \dsibv, \cinv, \coutv$ and $\csibv$ is bounded by $n^3 2^{3c}$. For each choice $C_a$ and $C_b$, the algorithm takes at-most $n2^c$ time to go through all possible entries of $\delta$ and $C$ (in \textbf{case 5} and \textbf{case 6}) and there are at-most $2^{2c}$ choices of $C_a$ and $C_b$. The recursion runs for at most $n^2$ times. Hence, the worst-case running time of the algorithm is $\mathcal{O}(2^{6c}n^6)$. Hence, by combining all the above six cases, we obtain the following theorem.

\begin{theorem}
Given an unweighted, undirected tree $T = (V(T),E(T))$ with $n$ vertices and $c$ color classes, there exists a fixed-parameter tractable (FPT) algorithm that computes a Minimum Consistent Subset (\mcs) in time $\mathcal{O}(2^{6c}n^6)$ when $c$ is the parameter.
\end{theorem}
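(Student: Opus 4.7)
The plan is to prove that the dynamic program set up via Cases 1--6 correctly computes a \mcs for $T$ and to count its states and transitions to obtain the $\mathcal{O}(2^{6c}n^6)$ bound. First I would root $T$ at an arbitrary vertex $r$ and, for every tuple $(v,i,\dinv,\doutv,\dsibv,\cinv,\coutv,\csibv)$, define $P(T_i(v),\dinv,\doutv,\dsibv,\cinv,\coutv,\csibv)$ as the size of a minimum partial consistent subset of $T_i(v)$ in the sense of Definition \ref{def1} (and $+\infty$ if none exists), where distances range over $\{0,1,\ldots,n\}\cup\{\infty\}$ and colour sets over subsets of $C$. The \mcs size is then recovered as the minimum of $P(T_{\eta_r}(r),\dinv,\infty,\infty,\cinv,\emptyset,\emptyset)$ over all feasible choices of $(\dinv,\cinv)$, since at the root $r$ both $T^{out}(r)$ and $T_{\eta_r+}(r)$ are empty and so the universal quantifier over $X,Y$ in Definition \ref{def1} becomes vacuous, reducing the condition to the standard \mcs requirement.

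Next I would establish correctness by induction on the pair $(v,i)$, using Lemma \ref{claim:DP} as the key structural ingredient. Cases 1 and 2 are feasibility checks that discard parameter tuples whose constraints force $v$'s own colour $C(v)$ to lack a nearest-neighbour witness of the right distance or type. Case 3 handles $\dinv=0$, i.e.\ the situation $v\in W$: then $v$ itself lies at hop-distance $1$ from each child's root, so the child subtrees decouple and each is solved recursively with $\doutv=1$, $\coutv=\{C(v)\}$, $\dsibv=\infty$ and $\csibv=\emptyset$, minimising over the child's free parameters $(\delta,C')$. For $\dinv>0$, Cases 4--6 split the nearest-neighbour witnesses at distance $\dinv$ between $T_{i-1}(v)$ and $T(v_i)$ via a partition $\cinv=C_a\cup C_b$: Case 4 is the balanced split with both sides contributing at distance exactly $\dinv$; Cases 5 and 6 are the asymmetric splits in which one side contributes at distance $\dinv$ and the other at some strictly larger $\delta>\dinv$ with a guessed colour set $C$. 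In every case the auxiliary quantities $\delta_x$, $\dminv$, $C_{i-1}^{\sib}$, and $C_i^{\out}$ propagate the boundary information: the sibling-side distance for $T_{i-1}(v)$ merges what $T_{i+}(v)$ and the newly absorbed $T(v_i)$ contribute, while the outer-side distance for $T(v_i)$ is the minimum of the contributions from $T_{i-1}(v)$, $T_{i+}(v)$, and $T^{out}(v)$, each incremented by $1$ because $v_i$ is one hop from $v$.

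The main obstacle is showing that the six-tuple $(\dinv,\doutv,\dsibv,\cinv,\coutv,\csibv)$ is a sufficient interface: every optimal restriction of a global \mcs to $T_i(v)$ matches exactly one of the recursive branchings, and conversely every valid combination of child optima glues back into a legitimate partial consistent subset of $T_i(v)$. The forward direction is essentially the content of Lemma \ref{claim:DP} applied at the split point $v_i$, giving us that swapping the restriction $S_T\cap V(T_i(v))$ for any partial consistent subset with the same six-tuple keeps the global set consistent. The backward direction amounts to verifying, case by case, that $\dminv$ together with the union $\cminv$ correctly encodes which external witnesses are competitive for every $u\in T_i(v)$ and, symmetrically, which internal witnesses are competitive for every $w\in T\setminus T_i(v)$; Cases 5 and 6 are then reduced to Case 4 by pushing one side's realising distance strictly beyond $\dinv$, so the substantive checking is concentrated in Case 4.

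Finally, for the running time I would count: the state space has $O(n^2)$ choices of $(v,i)$, at most $(n+2)^3 = O(n^3)$ distance triples, and $2^{3c}$ colour-set triples, giving $O(n^5 \, 2^{3c})$ states. Per state, Cases 5 and 6 iterate over $O(n\, 2^c)$ pairs $(\delta,C)$ and $O(2^{2c})$ partitions $(C_a,C_b)$ of $\cinv$, with $O(1)$ table look-ups, so each transition costs $O(n\, 2^{3c})$. Multiplying yields the claimed total runtime $O(n^5 \, 2^{3c}) \cdot O(n\, 2^{3c}) = \mathcal{O}(2^{6c}n^6)$, which is of the form $f(c)\cdot n^{\mathcal{O}(1)}$ and hence FPT in the parameter $c$, establishing the theorem.
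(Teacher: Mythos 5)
Your proposal is correct and follows essentially the same route as the paper: the same six-parameter interface $(\dinv,\doutv,\dsibv,\cinv,\coutv,\csibv)$ for subtrees $T_i(v)$, the same reliance on Lemma \ref{claim:DP} as the exchange argument justifying the DP, the same six-case recursion with the $C_a\cup C_b$ split, and the same state/transition count $O(n^5 2^{3c})\cdot O(n\,2^{3c})=\mathcal{O}(2^{6c}n^6)$. Your write-up is, if anything, slightly more explicit than the paper about the induction on $(v,i)$ and about extracting the final answer at the root.
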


\section{\np-hardness of \mcs on Interval Graphs} \label{hardnessint}
A graph $H$ is said to be an interval graph if there exists an interval layout of the graph $H$, or in other words, for each node, $v_i \in V(H)$ one can assign an interval $\alpha_i$ on the real line such that $(v_i,v_j) \in E(H)$ if and only if $\alpha_i$ and $\alpha_j$ (completely or partially) overlap in the layout of those intervals.   

We prove that the Minimum Consistent Subset problem is \npc even when the input graph is an interval graph. We present a reduction from the Vertex Cover problem for cubic graphs. It is known that Vertex Cover remains NP-complete even for cubic graphs~\cite{GAREY1976237}. For any set of intervals $\mathcal{I}$, let $G(\mathcal{I})$ be the interval graph corresponding to the set of intervals $\mathcal{I}$.

\begin{tcolorbox}[breakable,bicolor,
  colback=cyan!5,colframe=cyan!5,title=Interval Graph Construction.,boxrule=0pt,frame hidden]

\noindent \underline{\textbf{Interval Graph Construction.}}

\noindent Let $G$ be any cubic graph, where $V(G)=\{v_1, \ldots, v_n\}$ is the set of vertices, and $E(G)=\{e_1, \ldots, e_m\}$ is the set of edges in $G$. We create the set of intervals $\mI_G$ for $G$ on a real line $\cal L$. The set of intervals in $\mI_G$ is represented by intervals of three different sizes, {\em medium}, {\em small} and {\em large}, where each medium interval is of unit length, each small interval is of length $\epsilon<<\frac{1}{2n^3}$ and the length of the large interval is $\ell>>2n$.

We define $\mI_G=\mI_1\cup \mI_2\cup \mI_3\cup \mI_4$ where $\mI_1$ contains $2m$ {\em medium} size intervals (two intervals for each edge) and defined as  $\mI_1=\{I(e_i,v_j):e_i=(v_j,y)\in E(G), y\in V(G)\}$.  We set color $c_i$ to the interval $I(e_i,v_j)$. $\mI_2$ contain $n\cdot n^3$ {\em small} intervals of color $c_{m+1}$, and $\mI_3$ contain $n\cdot n^4$ {\em small} intervals of color $c_{m+1}$. $\mI_4$ contains one large interval $I_\ell$ of color $c_1$.

We create the following vertex gadget $X_i$ for each vertex $v_i\in V(G)$. $X_i$ contains the following {\em medium} size intervals $\{I(e,v_i):e=(v_i,x)\in E(G)\}$ corresponding to the edges that are incident on $v_i$. These intervals span the same region $s_i$ of unit length on the real line $\cal L$; hence, they are mutually completely overlapping. In the vertex gadget $X_i$, we also include a total of $n^3$ mutually non-overlapping small intervals in the set $\mI_2$. Span of all the small intervals in $X_i$ is contained in the span $s_i$ of the {\em medium} sized intervals in $X_i$ (see \Cref{fig:1K}).

Each vertex gadget is placed one after another (in a non-overlapping manner) along the line $\cal L$ in an arbitrary order, such that a total of $n^4$ mutually non-overlapping small intervals can be drawn between two consecutive vertex gadgets. Thus, $\mI_3$ contains $n$ sets of $n^4$ non-overlapping small intervals. Finally, $\mI_4$ contains a single large interval 
$I_\ell$ that contains all the intervals in $\mI_1 \cup \mI_2\cup \mI_3$. This completes the construction.

\end{tcolorbox}
 
\begin{figure}[!h]
\centering
\includegraphics[width=1\textwidth]{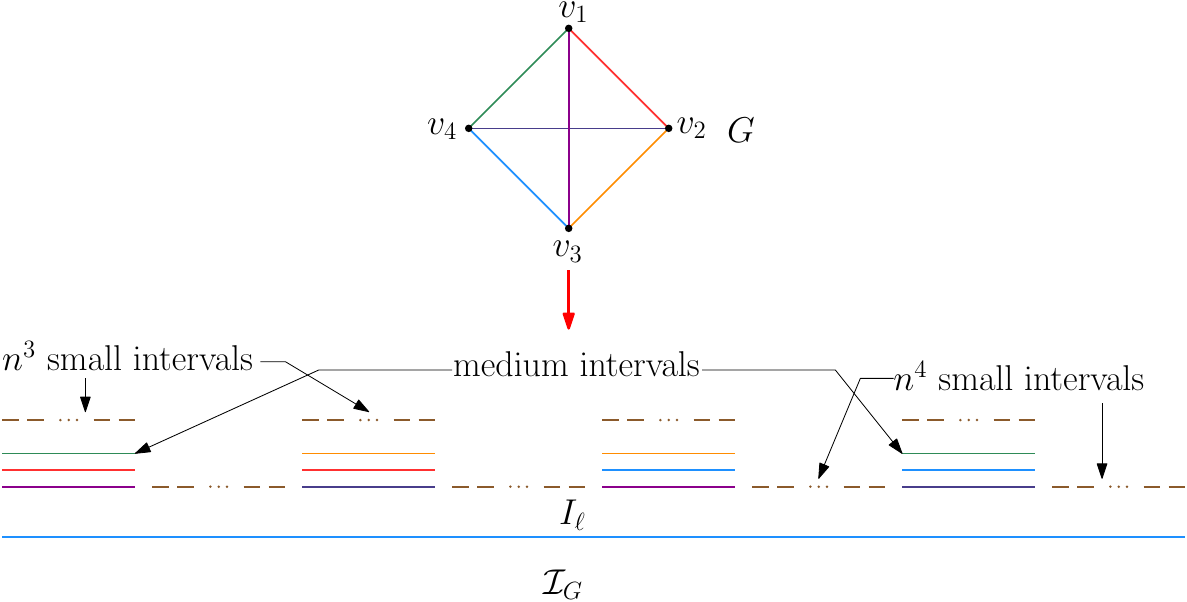}
\caption{An example reduction}
\label{fig:1K}
\end{figure}

\begin{lemma}\label{lemma:intervallemma}
The graph $G$ has a vertex cover of size at most $k$ if and only if the corresponding interval graph $G(\mI_G)$ has a consistent subset of size at most $K=k(3+n^3)$.
\end{lemma}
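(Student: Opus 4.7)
\bigskip
\textbf{Proof proposal.}

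\medskip

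My plan is to prove both directions of the equivalence separately, with the backward direction being the more delicate one.

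For the forward direction, suppose $V'$ is a vertex cover of $G$ of size at most $k$. I will take $S$ to be the union of \emph{all} intervals in each vertex gadget $X_i$ with $v_i \in V'$, i.e., the three medium intervals of $X_i$ plus the $n^3$ small intervals of $X_i$. This immediately gives $|S| \le k(3+n^3) = K$, and $I_\ell \notin S$. I will then verify consistency case-by-case. The key distance computations rely on two facts: (a) the large interval $I_\ell$ makes the graph have diameter two, so any two intervals are at distance one or two; and (b) intervals in distinct vertex gadgets have $I_\ell$ as their only common neighbor, so they are at distance exactly two. For an interval $z \notin S$ sitting in an inactive gadget $X_i$ (with $v_i \notin V'$), every vertex of $S$ lies in some $X_{j}$ with $j \neq i$, hence $d(z,S)=2$ and $\NN(z,S)=S$; it then suffices to exhibit a vertex of $z$'s color in $S$. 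For medium $z=I(e_j,v_i)$ this is the interval $I(e_j,v_{i'})\in X_{i'}\subseteq S$, which exists because $V'$ covers $e_j$. For small intervals of color $c_{m+1}$ (inside any inactive gadget or inside $\mathcal{I}_3$) and for $I_\ell$ itself (color $c_1$), I just need to check that $S$ contains a vertex of color $c_{m+1}$ (any small from an active gadget) and a vertex of color $c_1$ (guaranteed by the vertex cover covering $e_1$).

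For the backward direction, let $S$ be a consistent subset with $|S| \le K$. I proceed in four steps. First, I rule out $I_\ell \in S$: since each small interval $x \in \mathcal{I}_3$ has $I_\ell$ as its only neighbor, $I_\ell \in S$ forces $\NN(x,S)=\{I_\ell\}$ whenever $x\notin S$, violating consistency because $C(I_\ell)=c_1 \neq c_{m+1}=C(x)$; hence every $\mathcal{I}_3$-small would have to lie in $S$, giving $|S|\ge n^5 \gg K$, a contradiction. Second, I show that every vertex gadget $X_i$ is either entirely contained in $S$ or entirely disjoint from $S$; call $X_i$ \emph{active} in the former case. Third, counting yields that the number of active gadgets is at most $k$, since each contributes $3+n^3$ vertices to $S$ and $|S|\le k(3+n^3)$. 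Fourth, I claim $V' = \{v_i : X_i \text{ is active}\}$ is a vertex cover. If some edge $e_j=(v_i,v_{i'})$ had both endpoints outside $V'$, then the medium $z=I(e_j,v_i)$ would be not in $S$ and would have no neighbor in $S$ (its neighbors are all in $X_i$ or equal $I_\ell$, none of which lie in $S$), so $\NN(z,S)=S$; but the only color-$c_j$ vertices are $I(e_j,v_i)$, $I(e_j,v_{i'})$, and (if $j=1$) $I_\ell$, and none of these is in $S$, violating consistency.

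The main obstacle is step two: proving the dichotomy ``$X_i\subseteq S$ or $X_i\cap S=\emptyset$.'' The argument has two halves. First, I show that if some small $y\in X_i$ is missing from $S$ while some medium $z\in X_i$ lies in $S$, then $y$'s only neighbors are the three mediums of $X_i$ and $I_\ell$ (since the $X_i$-smalls are mutually non-overlapping), so $d(y,S)=1$ and $\NN(y,S)$ consists only of mediums, none having color $c_{m+1}$; a contradiction. Hence either all $X_i$-smalls lie in $S$ or no medium of $X_i$ lies in $S$. Second, I show that if some medium $z=I(e_j,v_i)$ of $X_i$ is missing from $S$ but some other vertex $w\in X_i$ lies in $S$, then $w$ overlaps $z$ (since all $X_i$-intervals contain points of $s_i$ that $z$ spans), so $d(z,S)=1$ with $\NN(z,S)$ drawn from $X_i\setminus\{z\}$ together with possibly $I_\ell$; the unique color-$c_j$ vertex of $X_i$ is $z$ itself, and $I_\ell\notin S$, so no neighbor of $z$ in $S$ has color $c_j$, again a contradiction. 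Combining the two halves forces the all-or-nothing behavior of each gadget, which is the engine driving the size bound and the vertex-cover extraction.
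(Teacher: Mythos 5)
Your proof is correct and follows essentially the same route as the paper's: the same forward construction (take every interval of every gadget $X_i$ with $v_i$ in the cover) and the same backward skeleton (exclude $I_\ell$, establish the all-or-nothing property of each gadget, count active gadgets, extract the cover). You additionally supply full justifications for the two steps the paper merely asserts; in particular, your exclusion of $I_\ell$ via the $\mathcal{I}_3$ intervals is cleaner than the paper's own argument for that step.
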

\begin{proof}
 With a slight abuse of notations, we will denote the vertex in $G(\mI_G)$ corresponding to an interval $I\in \mI_G$ by $I$.
$(\Rightarrow)$
Let $A \subseteq V(G)$ be a vertex cover of $G$. Consider the set of intervals $\mI_A=\bigcup_{v_i \in A} X_i$. We prove that $\mathcal{I}_A$ is a consistent subset of $G(\mI_G)$. Note that as $|X_i|=3+n^3$, $|\mathcal{I}_A|=k(3+n^3)$.

As the vertices in $A$ cover all the edges in $G$, $\mI_A$ must contain at least one interval of each color $\{c_1, \cdots, c_m\}$. As the unit intervals in $X_i$ associated with a vertex $v_i\in A$ are of colors different from the color of the small intervals in $X_i$, $\mI_A$ contains at least $n^3$ small intervals. Therefore $\mI_A$ contains at least one interval from each color in $\{c_1, \ldots c_{m+1}\}$.
Observe that, (i) the interval $I_\ell\in \mI_4$ of color $c_1$ contains an interval of color $c_1$ that corresponds to the edge $e_1$, and (ii) the distance between any two nodes  
corresponding to medium intervals in two different vertex gadgets of $G(\mI_G)$ is $2$
(via the node corresponding to the interval $I_\ell$ in $G(\mI_G)$). Thus, $\mI_A$ is a consistent subset.

    \smallskip
\noindent $ (\Leftarrow)$ 
Let $\mI_B\subseteq \mI_G$ be any consistent subset of $G(\mI_G)$ of cardinality $(3+n^3)k$. 
Now, if $\mI_B$ contains $I_\ell$ then $\lvert \mI_G\rvert-2\leq \lvert \mI_B\rvert\leq\lvert \mI_G\rvert$ because if $e_1=(v_i,v_j)$ be the edge of color $c_1$, then we can only do not take the medium intervals of color $c_1$ from the vertex gadget $X_i$ and $X_j$ in $\mI_B$ because they are covered by $I_\ell$, which contradicts the fact that $|\mI_B|=(3+n^3)k$.
Thus, we have $I_\ell\notin \mI_B$.

By the definition, $\mI_B$ contains at least one color from $\{c_1,\cdots, c_m,c_{m+1}\}$. Also, if $\mI_B$ contains one interval from the gadget $X_i$ of any vertex $v_i$, then it must contain all the intervals from $X_i$; otherwise, it can not be a consistent subset. Thus $\mI_B$ is the union $\mathcal{X}$ of at most $k$ sets from $\{X_i:i\in[n]\}$, and it contains at least one interval from each color $\{c_1,\cdots,c_m\}$, and a few intervals of color $c_{m+1}$. Now, consider the set $V_B=\{v_i:X_i\in \mathcal{X}\}$, which is a vertex cover for the graph $G$ of size at most $k$. 

Hence, the lemma is proved.
\end{proof}

\section{log-APX Approximation of \mscs on general graphs}\label{section7}

We reduce an instance of the \emph{Set Cover} problem to an instance of the Strict Consistent Subset problem of a graph $G=(V(G), E(G))$ in polynomial time. The graph $G$ has two \emph{red} vertices, while all others are \emph{blue}. It is known that the Set Cover problem is \lapxh, i.e., it is $\nph$ to approximate within a factor of $c \cdot \log n$ \cite{raz97}.
    
\textbf{Reduction.}
 Given a set cover instance with $n$ elements $I=\{e_1, e_2, \dots, e_n\}$ and $m$ sets $S=\{S_1, S_2, \dots, S_m\}$, where each $S_i$ is a subset of $I$ and their union is $I$. We construct a graph $G=(V(G),E(G))$ as follows (see \Cref{fig:1G}(a)):
\begin{enumerate}
        \item Initially $V(G) :=\emptyset$ and $E(G) :=\emptyset$.
        
        \item For each $e_i\in I$, create a vertex $x_i$ in a set $V_1$, resulting in $n$ vertices.

        \item For each set $S_j\in S$, create a vertex $y_j$ in a set $V_2$, resulting a total of $m$ vertices.

        \item Add an edge between $y_j$ and $x_i$ if $e_i\in S_j$.
        
        \item Connect every pair of vertices in $V_2$, forming a clique.
        
        \item Add a \emph{red} vertex $r_1$ in a set $V_3$ and connect it to all $n$ vertices in $V_1$.
        
        \item Add another \emph{red} vertex $r_2\in V_3$ and connect it to $r_1$.
        \item Set $V(G)=V_1\cup V_2\cup V_3$, resulting a total of $n+m+2$ vertices.

        \item Assign \emph{blue} color to all vertices in $V_1$ and $V_2$.

    \end{enumerate}
\begin{figure}
\centering
\includegraphics[width=.4\textwidth]{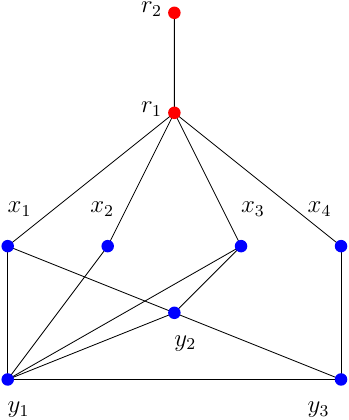}
\caption{Reduction from an instance of the Set Cover problem to an instance of the Strict Consistent Subset problem with $I=\{e_1,e_2,e_3,e_4\}$ and $S_1=\{e_1,e_2, e_3\}$, $S_2=\{e_1,e_3\}$, $S_3=\{e_4\}$.}\label{fig:1G}
\end{figure}
    
This completes the construction of $G=(V(G), E(G))$ from the set cover instance.

\begin{lemma}{\label{mannalemma6}}
The set system $\{I,S\}$ has a set cover of size $k$ if and only if the graph $G=(V(G),E(G))$ has a strict consistent subset of size $k+1$
\end{lemma}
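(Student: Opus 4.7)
The plan is to establish both directions of the equivalence by exploiting the two-block structure of the construction: the two red vertices $r_1, r_2$ form one block and all vertices in $V_1 \cup V_2$ form the unique blue block. Thus by \Cref{bubailemma101}, any strict consistent subset must contain at least one red vertex, which accounts for the ``$+1$'' in the correspondence. The key distances to track are $d(r_1, y_j)=2$, $d(r_2, x_i)=2$, and $d(r_2, y_j)=3$.

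\textbf{Forward direction.} Given a set cover $\{S_{j_1}, \ldots, S_{j_k}\}$ of size $k$, I would take $S' = \{r_2\} \cup \{y_{j_1}, \ldots, y_{j_k}\}$, which has size $k+1$, and verify strict consistency vertex by vertex. The vertex $r_2$ is its own nearest neighbor in $S'$; $r_1$ has $r_2$ as its unique nearest neighbor at distance $1$ (each $y_{j_t}$ is at distance $2$); each $x_i$ is covered by some $S_{j_t}$, so the blue $y_{j_t}$ is a nearest neighbor at distance $1$, strictly closer than the red $r_2$ at distance $2$; and every $y_j$ either lies in $S'$ or has a clique-mate in $V_2 \cap S'$ at distance $1$, strictly closer than $r_2$ at distance $3$.

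\textbf{Backward direction.} Given a strict consistent subset $S'$ of size at most $k+1$, I would split on whether $r_1 \in S'$. If $r_1 \in S'$, then since $r_1$ is adjacent to every $x_i$, any $x_i \notin S'$ would have the red $r_1$ as a tied nearest neighbor at distance $1$, violating strictness; hence $V_1 \subseteq S'$, so $|S'| \geq n+1$ forces $k \geq n$, in which case choosing an arbitrary $S_{j(i)}$ containing each $e_i$ yields a cover of size at most $n \leq k$. Otherwise $r_1 \notin S'$, and by \Cref{bubailemma101} applied to the red block, $r_2 \in S'$. Setting $T = S' \cap V_2$ and $U = S' \cap V_1$, I have $|T| + |U| \leq k$. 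Strict consistency at each $x_i \notin U$ forbids $r_2$ (at distance $2$) from being a tied nearest neighbor, so some blue vertex in $S'$ must sit at distance strictly less than $2$ from $x_i$, which can only be a $y_j \in T$ with $e_i \in S_j$. Taking $C = \{S_j : y_j \in T\} \cup \{S_{j(i)} : x_i \in U\}$, where $S_{j(i)}$ is any set containing $e_i$, gives a set cover of size at most $|T| + |U| \leq k$.

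The main obstacle is the meticulous distance-bookkeeping demanded by strictness rather than mere consistency: the arguments crucially use that $r_2$ sits at distance exactly $2$ from every $x_i$, so any uncovered element would produce a tied nearest neighbor of the wrong color. This threshold is precisely what forces the $y_j$-witnesses to exist at distance $1$, which is the combinatorial content of the set cover condition.
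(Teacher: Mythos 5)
Your proof is correct and follows essentially the same route as the paper: the same forward construction $\{r_2\}\cup\{y_{j_1},\dots,y_{j_k}\}$, and the same backward case split on whether $r_1$ belongs to the solution, invoking Lemma~\ref{bubailemma101} to force $r_2$ in. If anything, you are slightly more careful than the paper in the backward direction, explicitly handling the degenerate case $k\ge n$ and the possibility that some $x_i\in S'$, rather than ruling the latter out via the tied nearest neighbors of $r_1$.
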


\begin{proof}
If there exists a Set Cover $X$ of size $k$, we construct a strict consistent subset $Y$ of size $k+1$ by including the vertices of $V_2$ corresponding to the sets in the cover, together with $r_2$. Since all vertices of $V_2$ are at distance two from $r_1$, the nearest vertex to $r_1$ in $Y$ is $r_2$. Every vertex of $V_1$ has a nearest neighbor in $Y$ since the vertices of $V_2$ chosen in $Y$ correspond to a set cover of ${I,S}$, and $V_2$ being a clique ensures all unselected vertices of $V_2$ have a neighbor in $Y$, making $Y$ a strict consistent subset.

Conversely, suppose $Y$ is a strict consistent subset of size $k+1$. If $Y$ contains $r_1$, it must include all the vertices of $V_1$, yielding a strict consistent subset of size at least $n+1$. So, assume $r_1\notin Y$, implying $r_2 \in Y$ according to \Cref{bubailemma101}. Since $r_2\in Y$, no vertex of $V_1$ is in $Y$; otherwise, $r_1$ would have a nearest neighbor in $Y$ of a different color. Thus $Y$ contains no vertices of $V_1$, each vertex of $V_1$ must have a neighbor in $V_2$ within $Y$. The sets corresponding to the vertices of $V_2$ in $Y$ form a set cover of $\{I,S\}$ of size $k$.
\end{proof}

\begin{theorem}
The \mscs problem is $\lapxh$.
\end{theorem}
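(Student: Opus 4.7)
The plan is to argue that the polynomial-time reduction constructed in Section~\ref{section7}, together with the exact correspondence recorded in Lemma~\ref{mannalemma6}, transfers the log-APX-hardness of Set Cover \cite{raz97} to \mscs{} up to a constant-factor loss in the approximation ratio. So the overall structure mirrors the log-APX-hardness proof for \mcs{} in Section~\ref{lapxhmcs}: describe the reduction (already done), observe the exact optimum correspondence (already done in Lemma~\ref{mannalemma6}), and translate approximate solutions in an approximation-preserving way.

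First I would record two structural facts about the reduction. The output graph $G$ has $N = n + m + 2$ vertices, so $\log N = \Theta(\log n)$ whenever $m$ is polynomially bounded in $n$; and the trivial set $V_1 \cup \{r_1\}$ is itself a strict consistent subset, so the optimum \mscs{} value $\ell^*$ is at most $n+1$ and, by Lemma~\ref{mannalemma6}, equals $k^* + 1$ where $k^*$ is the optimum Set Cover value. Now suppose for contradiction that \mscs{} admits a polynomial-time $c \log N$-approximation for some sufficiently small constant $c > 0$. I would apply it to $G$ to obtain a strict consistent subset $Y$ with $|Y| \le c \log N \cdot (k^*+1)$ and case-analyse on whether $r_1 \in Y$: if $r_1 \notin Y$, the backward direction of Lemma~\ref{mannalemma6} produces a Set Cover $\{S_j : y_j \in Y \cap V_2\}$ of size at most $|Y|-1 \le 2 c \log N \cdot k^*$ (using $k^* \ge 1$); otherwise $r_1 \in Y$ and strict consistency forces $V_1 \subseteq Y$, and I would fall back on a direct greedy set-cover algorithm on $(I,S)$ that independently yields an $H_n$-approximation. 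Taking the better of the two outputs gives an $O(\log n)$-approximation for Set Cover, contradicting its log-APX-hardness for sufficiently small $c$.

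The main obstacle is that the correspondence $\ell^* = k^* + 1$ carries an additive $+1$ gap, and that an approximate \mscs{} output may contain $r_1$, either of which prevents a clean L-reduction. I plan to absorb the $+1$ into a multiplicative constant using the non-triviality assumption $k^* \ge 1$, and to handle the degenerate case $r_1 \in Y$ by the greedy fallback above (or, equivalently, by post-processing $Y$ into the canonical form $\{r_2\} \cup \{y_j : S_j \in C\}$ where $C$ is the output of greedy Set Cover, which is always a valid strict consistent subset by the forward direction of Lemma~\ref{mannalemma6}). With these technical points handled, the hypothetical $c \log N$-approximation for \mscs{} yields a $c' \log n$-approximation for Set Cover with $c'$ a small constant depending on $c$, contradicting \cite{raz97} and thereby establishing the \lapxh{}-ness of \mscs{}.
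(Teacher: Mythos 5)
Your proposal takes the same route as the paper: the paper's own proof of this theorem is the single sentence that Lemma~\ref{mannalemma6} together with the log-APX-hardness of Set Cover implies the claim, so your write-up is in fact a more careful elaboration of exactly the argument the paper leaves implicit (the reduction is approximation-preserving up to constants, with the additive $+1$ absorbed via $k^*\ge 1$). The one step that needs repair is your fallback in the case $r_1\in Y$: a greedy $H_n$-approximation to Set Cover does not by itself contradict log-APX-hardness, since greedy already achieves ratio $\ln n$ while the hardness of \cite{raz97} only rules out ratio $c_0\log n$ for some specific small constant $c_0$; hence ``taking the better of the two outputs'' as you state it only guarantees ratio $\min(2c\log N,\,H_n)$, which in the bad case is $H_n$ and proves nothing. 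The correct observation for that case is that $r_1\in Y$ forces $V_1\subseteq Y$, so $n+1\le |Y|\le c\log N\,(k^*+1)$ and therefore $k^*\ge (n+1)/(c\log N)-1$; then even the trivial cover of size at most $n$ (one set per element) already has ratio $O(c\log n)$. With that substitution both cases yield a $c'\log n$-approximation for Set Cover with $c'\to 0$ as $c\to 0$, which is the contradiction the argument requires, and the rest of your proof stands.
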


\begin{proof}
Since  the Set Cover problem is \lapxh \cite{raz97}, Lemma~\ref{mannalemma6} implies that the \mscs problem is also \lapxh.
\end{proof}

\section{NP-hardness of MSCS on Planar Graphs}\label{npmscs}

A planar graph is one that can be drawn in the plane without edge crossings. We prove that the Minimum Strict Consistent Subset (\mscs) problem is $\npc$ on planar graphs by a reduction from the \textsc{Dominating Set} problem on planar graphs, which is known to be NP-complete~\cite{article}.

Let $G=(V(G),E(G))$ be a planar graph with $n = |V(G)|$, and an integer $k > 0$. The \textsc{Dominating Set} problem asks whether $G$ has a dominating set of size at most $k$. We construct a corresponding vertex-colored graph $G'$ from $G$ as follows.

\begin{figure}
\centering
\includegraphics[width=.75\textwidth]{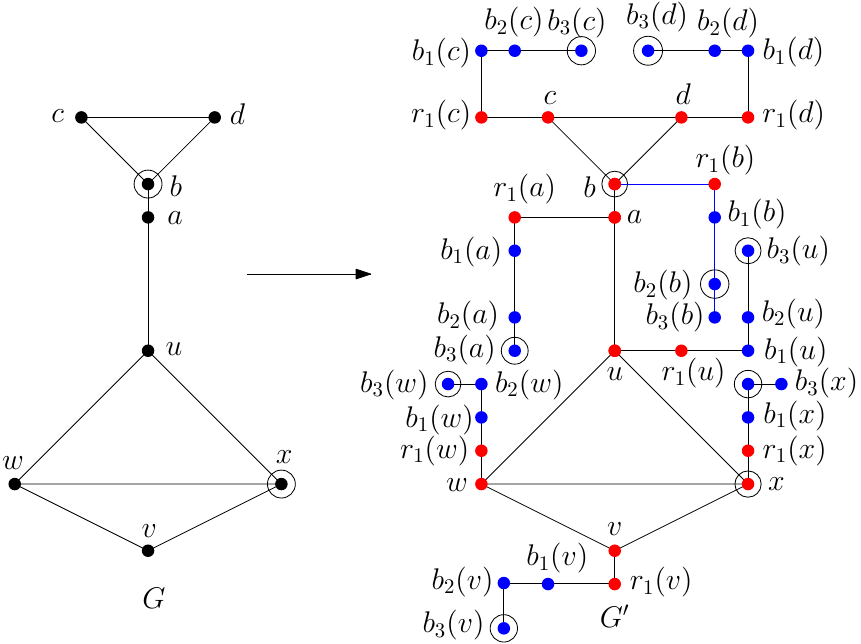}
\caption{Reduction from an instance of the Dominating Set problem on a planar graph $G$ to an instance of the Strict Consistent Subset problem on a planar graph $G'$. Vertices inside the circle represent those chosen in the solution.}
\label{fig:mscshphard}
\end{figure}

\begin{tcolorbox}[breakable,bicolor,
  colback=cyan!5,colframe=cyan!5,title=Interval Graph Construction.,boxrule=0pt,frame hidden]

\noindent \underline{\textbf{Planar Graph Construction.}}

\vspace{0.3em}

For each vertex $v \in V(G)$, we add a path of four new vertices to $G'$:
\[
r_1(v) \text{ -- } b_1(v) \text{ -- } b_2(v) \text{ -- } b_3(v),
\]
and connect the original vertex $v$ to $r_1(v)$ with an edge. The coloring of the vertices in $G'$ is assigned as follows (see Figure \ref{fig:mscshphard}):
\begin{itemize}
    \item Every original vertex $v \in V(G)$ and each new vertex $r_1(v)$ are colored \textbf{red}.
    \item Each new vertex $b_1(v)$, $b_2(v)$, and $b_3(v)$ is colored \textbf{blue}.
\end{itemize}

\end{tcolorbox}

We now establish essential lemmas and the main theorem for the reduced graph $G'$.

\begin{lemma}\label{lemmamscs1}
    For any strict consistent subset $S$ of $G'$ and for any vertex $v \in V(G)$, either both $r_1(v)$ and $b_1(v)$ are in $S$, or neither is in $S$.
\end{lemma}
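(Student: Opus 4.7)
\medskip

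My plan is to prove the biconditional by contradiction, handling each direction separately. Both directions exploit the same basic observation: $r_1(v)$ and $b_1(v)$ are adjacent (distance 1), but they carry different colors (red and blue respectively). In a strict consistent subset, every nearest neighbor of any vertex $u$ (in the subgraph induced by $S$) must share $u$'s color, so this adjacency between differently-colored vertices is highly constraining whenever only one of the pair is in $S$.

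First I would handle the direction $r_1(v)\in S \Rightarrow b_1(v)\in S$. Suppose for contradiction that $r_1(v)\in S$ but $b_1(v)\notin S$. Since $b_1(v)\notin S$ we have $d(b_1(v),S)\geq 1$, and since $r_1(v)\in S$ is adjacent to $b_1(v)$ we have $d(b_1(v),S)\leq 1$; hence $d(b_1(v),S)=1$ and $r_1(v)\in \NN(b_1(v),S)$. But $C(r_1(v))=\mathrm{red}\ne\mathrm{blue}=C(b_1(v))$, contradicting strict consistency at $b_1(v)$.

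For the reverse direction $b_1(v)\in S \Rightarrow r_1(v)\in S$, I would argue symmetrically at the vertex $r_1(v)$. Assume $b_1(v)\in S$ and $r_1(v)\notin S$. Then $r_1(v)\notin S$ gives $d(r_1(v),S)\geq 1$, while the adjacency $r_1(v)b_1(v)\in E(G')$ with $b_1(v)\in S$ yields $d(r_1(v),S)\leq 1$; so $d(r_1(v),S)=1$ and $b_1(v)\in \NN(r_1(v),S)$. Again $C(b_1(v))\ne C(r_1(v))$, violating strict consistency at $r_1(v)$. Note that whether or not $v\in S$ is irrelevant to this argument, because $v$ being at distance $1$ from $r_1(v)$ does not remove $b_1(v)$ from the nearest-neighbor set; what matters is that at least one differently-colored vertex sits in $\NN(r_1(v),S)$.

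Combining the two directions gives the desired equivalence. I do not anticipate any serious obstacle; the only subtlety is the trivial but necessary observation that, by the gadget construction, the only neighbors of $b_1(v)$ are $r_1(v)$ and $b_2(v)$, and the only neighbors of $r_1(v)$ are $v$ and $b_1(v)$, so adjacency along the edge $r_1(v)b_1(v)$ forces the distances computed above without ambiguity.
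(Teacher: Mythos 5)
Your proof is correct and follows essentially the same approach as the paper: in both directions, the adjacency of the differently-colored pair $r_1(v)$, $b_1(v)$ forces the one in $S$ to appear among the nearest neighbors of the one not in $S$, violating strict consistency. The extra distance bookkeeping you supply ($d(\cdot,S)\ge 1$ and $\le 1$) just makes explicit what the paper leaves implicit.
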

\begin{proof}
    Assume, for the sake of contradiction, that exactly one of $b_1(v)$ or $r_1(v)$ is in $S$ for some vertex $v$. First, suppose $b_1(v) \in S$ but $r_1(v) \notin S$. Since $r_1(v)$ is adjacent to $b_1(v)$ and is red, one of its nearest neighbors in $S$ would be $b_1(v)$, which is a blue vertex of a different color. This violates the condition for $S$ to be a strict consistent subset. Similarly, if $r_1(v) \in S$ but $b_1(v) \notin S$, then the blue vertex $b_1(v)$ would have $r_1(v)$ as one of its nearest neighbors in $S$, which is also forbidden since their colors differ. Therefore, both must be included together or excluded together.
\end{proof}

We now present the main lemma of the reduction.

\begin{lemma}
    The graph $G$ has a dominating set of size $k$ if and only if the colored graph $G'$ has a strict consistent subset of size $n + k$.
\end{lemma}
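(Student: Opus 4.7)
The plan is to reduce the argument to Lemma \ref{bubailemma101} (each block meets every strict consistent subset) and Lemma \ref{lemmamscs1} (the pair $r_1(v), b_1(v)$ is all-in or all-out of $S$), and then run a direct distance check on each pendant path of the construction.

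For the forward direction, given a dominating set $D \subseteq V(G)$ of size $k$, I would take
\[
S \;=\; D \;\cup\; \{b_2(v) : v \in D\} \;\cup\; \{b_3(v) : v \in V(G) \setminus D\},
\]
whose cardinality is $k + k + (n-k) = n+k$. Strict consistency is verified by a uniform pendant-path analysis. For $v \in D$, the red vertex $v \in S$ is the unique $S$-vertex at distance $1$ from $r_1(v)$, and $b_2(v) \in S$ sits at distance $1$ from each of $b_1(v), b_2(v), b_3(v)$, providing the unique nearest blue for all three. For $v \notin D$, a dominator $u \in D \cap N_G(v)$ places a red $S$-vertex at distance $1$ from $v$ and distance $2$ from $r_1(v)$, strictly beating the only blue $S$-vertex of that block, namely $b_3(v)$ at distances $4$ and $3$ respectively; the blue vertices of the block are covered because $b_3(v) \in S$ lies within distance $2$ of each.

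For the backward direction, given any strict consistent subset $S$ with $|S| \leq n + k$, I would set
\[
V'_D \;=\; (V(G) \cap S) \;\cup\; \{v \in V(G) : r_1(v) \in S\}.
\]
By Lemma \ref{bubailemma101}, each of the $n$ blue blocks contributes at least one vertex to $S$, so $S$ has at least $n$ blue vertices, leaving at most $k$ red. Since every red $S$-vertex is either an original vertex of $G$ or of the form $r_1(\cdot)$, we obtain $|V'_D| \leq k$. To show $V'_D$ dominates, take $v \notin V'_D$ and let $d^\ast = d(r_1(v), S)$. Then $v, r_1(v) \notin S$, and by Lemma \ref{lemmamscs1} also $b_1(v) \notin S$; strict consistency forces every $S$-vertex at distance $d^\ast$ from $r_1(v)$ to be red. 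The block of $v$ contributes to $S$, so $d^\ast \in \{2,3\}$. The case $d^\ast = 3$ is impossible: $b_2(v) \notin S$ would then force $b_3(v) \in S$, but $b_3(v)$ is blue at distance $3$, contradicting the color requirement at $r_1(v)$. Hence $d^\ast = 2$, and the only red vertices at distance $2$ from $r_1(v)$ are the $G$-neighbors of $v$, so some $u \in N_G(v) \cap S$ witnesses the domination of $v$ by $V'_D$.

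The main obstacle is the $d^\ast = 3$ subcase of the backward direction, where one must jointly exploit Lemma \ref{bubailemma101} (to force $b_3(v) \in S$), Lemma \ref{lemmamscs1} (to eliminate $b_1(v)$), and strict consistency at $r_1(v)$ (to forbid any blue vertex at the minimum distance); together these three constraints clinch that domination must already happen at distance $2$ from $r_1(v)$, which is precisely a $G$-edge from a vertex of $V'_D$ to $v$.
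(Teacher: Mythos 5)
Your proof is correct and follows essentially the same route as the paper: the identical witness set $D \cup \{b_2(v): v\in D\} \cup \{b_3(v): v\notin D\}$ in the forward direction, and the same combination of Lemma~\ref{bubailemma101}, Lemma~\ref{lemmamscs1}, and a color/distance analysis at $r_1(v)$ in the backward direction (the paper cases on which $b_i(v)$ lies in $S$ rather than on $d(r_1(v),S)$, but the content is the same). The only cosmetic omission is that, for the exact-size statement, one should pad $V'_D$ up to size $k$ when $|V'_D|<k$, which the paper handles in one sentence.
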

\begin{proof}
    ($\Rightarrow$) Suppose $G$ has a dominating set $D \subseteq V(G)$ with $|D| = k$. We construct a strict consistent subset $S$ for $G'$ as follows:
    \begin{itemize}
        \item For each vertex $v \in D$: add $v$ (red) and $b_2(v)$ (blue) to $S$.
        \item For each vertex $v \notin D$: since $D$ is a dominating set, there exists a vertex $u \in D$ such that $u$ and $v$ are adjacent in $G$ ($uv \in E(G)$). For such a $v$, add $u$ (red, but note $u$ is already added if $u \in D$) and $b_3(v)$ (blue) to $S$.
    \end{itemize}
    Hence $|S| = n + k$, consisting of $n$ blue vertices ($b_2(v)$ for $v \in D$ and $b_3(v)$ for $v \notin D$) plus the $k$ red vertices in $D$. We now verify that $S$ is a strict consistent subset of $G'$.
    \begin{itemize}
        \item For a vertex $v \in D$:
        \begin{itemize}
            \item The red vertex $r_1(v)$ has only $v$ as its nearest neighbor in $S$.
            \item The blue vertex $b_1(v)$ has only $b_2(v)$ as its nearest neighbors in $S$.
            \item The blue vertex $b_2(v)$ is in $S$.
            \item The blue vertex $b_3(v)$ has only $b_2(v)$ as its nearest neighbors (as well as only adjacent vertex) in $S$.
        \end{itemize}
        \item For a vertex $v \notin D$ with dominating neighbor $u \in D$:
        \begin{itemize}
            \item The red vertex $v$ has $u$ (also red) as its nearest neighbor (and adjacent vertex) in $S$, and no blue vertices in $G'$ are adjacent to $v$.
            \item The red vertex $r_1(v)$ has $u$ as its nearest neighbor vertex ($d(r_1(v),$ $ u) $ $ \leq 2$ via the vertex $v$, while the distance to any blue vertex in $S$ is at least 3) in $S$.
            \item The blue vertices $b_1(v)$ and $b_2(v)$ have $b_3(v)$ as their nearest neighbor vertices in $S$.
            \item The blue vertex $b_3(v)$ is in $S$.
        \end{itemize}
    \end{itemize}
    Thus, every vertex $v$ in $G'$ has a nearest neighbor in $S$ of the same color, so $S$ is a valid strict consistent subset.

    \medskip
    ($\Leftarrow$) Conversely, suppose $G'$ has a strict consistent subset $S$ with $|S| = n + k$. By Lemma~\ref{bubailemma101}, for each $v \in V(G)$, the set $S$ must contain at least one vertex from the block $B(v) = \{b_1(v), b_2(v), b_3(v)\}$. Let $B = V(G) \cup \{r_1(v): v \in V(G)\}$ be the block of red vertices. Thus, $G'$ has $n+1$ such blocks. We consider three cases:
    \begin{enumerate}
        \item If $b_1(v) \in S$, then Lemma~\ref{lemmamscs1} implies that $r_1(v) \in S$ as well.
        \item If $b_2(v)$ is the only vertex from $B(v)$ in $S$, then $v$ must be in $S$; otherwise, $r_1(v)$ or $v$ would have its nearest neighbor in $S$ of a different color.
        \item If $b_3(v)$ is the only vertex from $B(v)$ in $S$, then at least one neighbor of $v$ in $V(G)$ must also be in $S$; otherwise, $r_1(v)$ would have $b_3(v)$ as its nearest neighbor.
    \end{enumerate}
    If $r_1(v)$ and $b_1(v)$ are both in $S$ for all $v \in V(G)$, then $|S| \geq 2n$, contradicting $|S| = n+k$ for $n > k$, which holds whenever $G$ has at least one edge. Therefore, for some vertices $v$, at least one of $b_2(v)$ or $b_3(v)$ (instead of $b_1(v)$) must belong to $S$. Thus, at most $k$ additional vertices in $S$ are red. 
    
    Let $D' = S \cap (V(G) \cup \{r_1(v): v \in V(G)\})$ be the set of red vertices in $S$. Then $|D'| \leq k$. We now construct a set $D$ from $D'$, which will be a dominating set of $G$ of size at most $k$. Initialize $D := \emptyset$ and proceed as follows:
    \begin{itemize}
    \item If $b_1(v)\in S$, then $r_1(v)\in S$. In such a case, include $v$ in $D$.
    \item If $b_1(v) \notin S$ but $b_2(v)\in S$, then $v\in S$. In such a case, include $v$ in $D$.
    \item If $b_1(v)\notin S$ and $b_2(v)\notin S$, but $b_3(v)\in S$, then at least one adjacent vertex of $v$ from $V(G)$ must be in $S$. In such a case, include one adjacent vertex of $v$ from $V(G)$ in $D$.
\end{itemize}
    Thus $|D|\leq k$. We claim that $D$ is a dominating set of $G$, because for every $v \in V(G)$, either $v \in D$ or at least one neighbor of $v$ is in $D$. Since $S$ already contains at least $n$ blue vertices and $|D|\leq k$, we may add extra vertices from $V(G)$, if necessary, to obtain $|D| = k$.
\end{proof}
\begin{theorem}
    The \mscs problem is $\npc$ on bichromatic planar graphs.
\end{theorem}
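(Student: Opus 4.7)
The plan is to derive NP-completeness by combining three essentially routine ingredients with the equivalence lemma just proved. The bulk of the work is already contained in that lemma, so the proof collapses to (i) verifying that \mscs lies in \np, (ii) checking that the reduction runs in polynomial time, and (iii) verifying that the output graph $G'$ is both planar and bichromatic.

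For membership in \np, I would observe that given a candidate subset $S \subseteq V(G')$ of size at most $\ell$, one can verify in polynomial time that $S$ is a strict consistent subset: run a BFS from each vertex $v \in V(G')$, compute $d(v,S)$, extract $\NN(v,S)$, and check that every vertex in $\NN(v,S)$ shares the color of $v$. This test is clearly polynomial in $|V(G')|$. For NP-hardness, the source problem is \textsc{Dominating Set} on planar graphs, which is \npc{}~\cite{article}. The construction of $G'$ from $G$ described just above is explicitly polynomial, producing a graph on $5n$ vertices and $|E(G)| + 4n$ edges. The key structural verification is planarity: starting from any planar embedding of $G$, for each vertex $v \in V(G)$ the attached path $v - r_1(v) - b_1(v) - b_2(v) - b_3(v)$ is pendant at $v$ and can be drawn inside an arbitrarily small disk around $v$ lying within a face incident to $v$, introducing no crossings. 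Hence $G'$ is planar. Bichromaticity is immediate from the construction, which uses only red and blue. The preceding lemma then supplies the equivalence that $G$ has a dominating set of size at most $k$ if and only if $G'$ has a strict consistent subset of size at most $n+k$, completing a polynomial-time many-one reduction.

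The only step that actually requires justification beyond a one-line remark is planarity preservation, and even this is immediate because appending a pendant path to a vertex of a planar graph cannot break planarity. I therefore do not expect any serious obstacle beyond correctly packaging the threshold $\ell = n + k$ for the decision version of \mscs and explicitly invoking the hypothesis $n > k$ (used implicitly in the backward direction of the lemma) so that the size bound $|S| \le n + k$ in fact forces $S$ to avoid the ``expensive'' choice of including both $b_1(v)$ and $r_1(v)$ for every $v$.
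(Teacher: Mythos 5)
Your proposal is correct and follows the same route as the paper: the paper's own proof of this theorem is just the two-sentence observation that membership in \np is easy and that hardness follows from the preceding reduction from planar \textsc{Dominating Set}, while you spell out the routine details (BFS-based certificate verification, polynomial size of $G'$, and preservation of planarity under attaching pendant paths) that the paper leaves implicit. No gaps; your added checks are all accurate.
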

    
\begin{proof}
    It is easy to see that the problem is in $\np$. Since the \mscs problem is $\nph$ by the above reduction, it follows that the problem is $\npc$.
\end{proof}

\section{Conclusion} 
We have shown that \mcs is NP-complete for trees and interval graphs, and have given an exact algorithm parameterized with respect to the number of colors for trees.
As a direction for future research, possibilities for approximation algorithms for \mcs can be explored on interval graphs and related graph classes such as circle graphs and circular-arc graphs. Parameterized algorithms may also be explored where, in addition to a parameter for the number of colours, there is also a parameter specifying the structural properties of the input graph.

Also, since \mscs is $\npc$ on planar graphs, designing approximation algorithms for planar graphs and hardness results on other graph classes appear to be promising directions. However, from the $\npc$ reduction, we observe that \mscs remains $\npc$ even when $c = 2$. Thus, no \fpt algorithm exists when the number of colors is taken as the parameter on planar graphs. Nevertheless, FPT results may still be achievable with respect to other parameters.

\bibliographystyle{elsarticle-num} 
\bibliography{ref}






\end{document}